
\documentclass[a4paper,fleqn]{cas-dc}



\usepackage[sort,numbers]{natbib}
\usepackage{amsmath,amsthm,amsfonts,amssymb,amscd}
\usepackage{lastpage,indentfirst}
\usepackage{enumerate}
\usepackage{subcaption}
\usepackage{fancyhdr}
\usepackage{mathrsfs}
\usepackage{xcolor}
\usepackage{IEEEtrantools}
\usepackage{graphicx}
\usepackage[english]{babel}
\usepackage{hyperref}
\usepackage{algpseudocode}
\usepackage[toc,page]{appendix}

\usepackage{todonotes}
\usepackage{latexsym,amssymb,amsmath, amsbsy,amsopn, amstext}
\usepackage{epsfig,tikz,pgf,hyperref,cancel,color,float,ifpdf}
\usepackage{extarrows,mathrsfs}
\usepackage{amsmath,amssymb,stfloats}
\usepackage{tikz,subcaption}
\usepackage[]{caption}
\usetikzlibrary{automata, positioning, arrows}
\usepackage[ruled,linesnumbered]{algorithm2e}
\usepackage{bm}

\newtheorem{proposition}{\bfseries Proposition}
\newtheorem{example}{\bfseries Example}
\newtheorem{assumption}{\it Assumption}
\newtheorem{theorem}{\bfseries Theorem}
\newtheorem{lemma}{\bfseries Lemma}

\newtheorem{remark}{\bfseries Remark}

\newtheorem{problem}{\bfseries Problem}

\newcommand{\R}{\mathbb{R}}

\newcommand{\F}{\mathscr{F}}

\newcommand{\C}{\mathscr{C}}

\newcommand{\Q}{\mathscr{Q}}
\newcommand{\A}{\mathscr{A}}
\newcommand{\mcp}{\mathscr{P}}
\newcommand{\la}{\lambda}
\newcommand{\di}{{\rm d}}
\newcommand{\ta}{\theta}
\newcommand{\ep}{\epsilon}
\newcommand{\ga}{\gamma}
\newcommand{\pa}{\partial}

\newcommand{\ka}{\kappa}

\def\tsc#1{\csdef{#1}{\textsc{\lowercase{#1}}\xspace}}
\tsc{WGM}
\tsc{QE}

\newif\ifdraft
\drafttrue

\makeatletter
\renewcommand\normalsize{%
\@setfontsize\normalsize\@xpt\@xiipt
\abovedisplayskip 2\p@ \@plus2\p@ \@minus5\p@
\abovedisplayshortskip \z@ \@plus3\p@
\belowdisplayshortskip 2\p@ \@plus3\p@ \@minus3\p@
\belowdisplayskip \abovedisplayskip
\let\@listi\@listI}
\makeatother

\setlength{\arraycolsep}{1.5pt}

\begin{document}
\begin{sloppypar}
\let\WriteBookmarks\relax
\def\floatpagepagefraction{1}
\def\textpagefraction{.001}

\shorttitle{}

\shortauthors{Y. Wang and X. Xu}

\title [mode = title]{Adaptive Safety-Critical Control for a Class of Nonlinear Systems with Parametric Uncertainties: A Control Barrier Function Approach}

\tnotetext[1]{This work was supported in part by National Science Foundation Grant 2209791 and 2222541.}

\author[1]{Yujie Wang}
\ead{yujie.wang@wisc.edu}

\author[1]{Xiangru Xu}
\ead{xiangru.xu@wisc.edu}

\address[1]{Department of Mechanical Engineering, University of Wisconsin-Madison, WI 53706, USA}

\begin{abstract}
This paper presents a novel approach for the safe control design of systems with parametric uncertainties in both drift terms and control-input matrices. The method combines control barrier functions and adaptive laws to generate a safe controller through a nonlinear program with an  explicitly given closed-form solution. The proposed approach verifies the non-emptiness of the admissible control set independently of online parameter estimations, which can ensure that the safe controller is singularity-free. A data-driven algorithm is also developed to improve the performance of the proposed controller by tightening the bounds of the unknown parameters. The effectiveness of the control scheme is demonstrated through numerical simulations.
\end{abstract}

\begin{keywords}
Safety-critical Control \sep Control Barrier Functions \sep Adaptive Control \sep Data-Driven Approach \sep Nonlinear Programming
\end{keywords}

\maketitle

\section{Introduction}
\label{sec:introduction}

Control barrier functions (CBFs) have been recently proposed as a systematic approach to ensure the forward invariance of control-affine systems \cite{Xu2015ADHS,ames2016control}. By including the CBF condition into a convex quadratic program (QP), a CBF-QP-based controller can act as a safety filter that modifies  potentially unsafe control inputs in a minimally invasive fashion. However, most existing CBF works require precise model information, which is often challenging to obtain. Robust CBF control methods have been proposed to address this issue, ensuring safety in the presence of bounded model uncertainties \cite{garg2021robust,nguyen2021robust,verginis2021safety,buch2021robust,wang2022disturbance}. Nevertheless, the design of a robust CBF controller relies on the bounds of the uncertainties or the Lipschitzness of the unknown dynamics, making it difficult to handle parametric uncertainties that are generally unbounded.

Adaptive control aims to achieve stabilization or desired tracking performance for uncertain dynamic systems through an adaptive law, and has been extensively studied in the past decades \cite{narendra1989stable,aastrom1995adaptive,KKK95,ioannou1996robust,astolfi2008nonlinear}. Most adaptive control strategies are based on uncertainty parameterization and the certainty equivalence principle, which means that the estimated parameters are used as if they are the true parameters in the feedback control design. For uncertain nonlinear systems in some canonical forms, many adaptive control design techniques have been developed using feedback linearization \cite{sastry1989adaptive,kanellakopoulos1991systematic}, backstepping \cite{KKK92,KKK95}, or averaging \cite{anderson1986stability,kosut1987stability}. A summary of the fundamental theoretical concepts and technical issues involved in multivariable adaptive control is documented in \cite{tao2014multivariable}, and a historical overview of adaptive control and its intersection with learning is provided in \cite{annaswamy2021historical}.

\begin{figure}[!t]
\centering
  \includegraphics[width=0.25\textwidth]{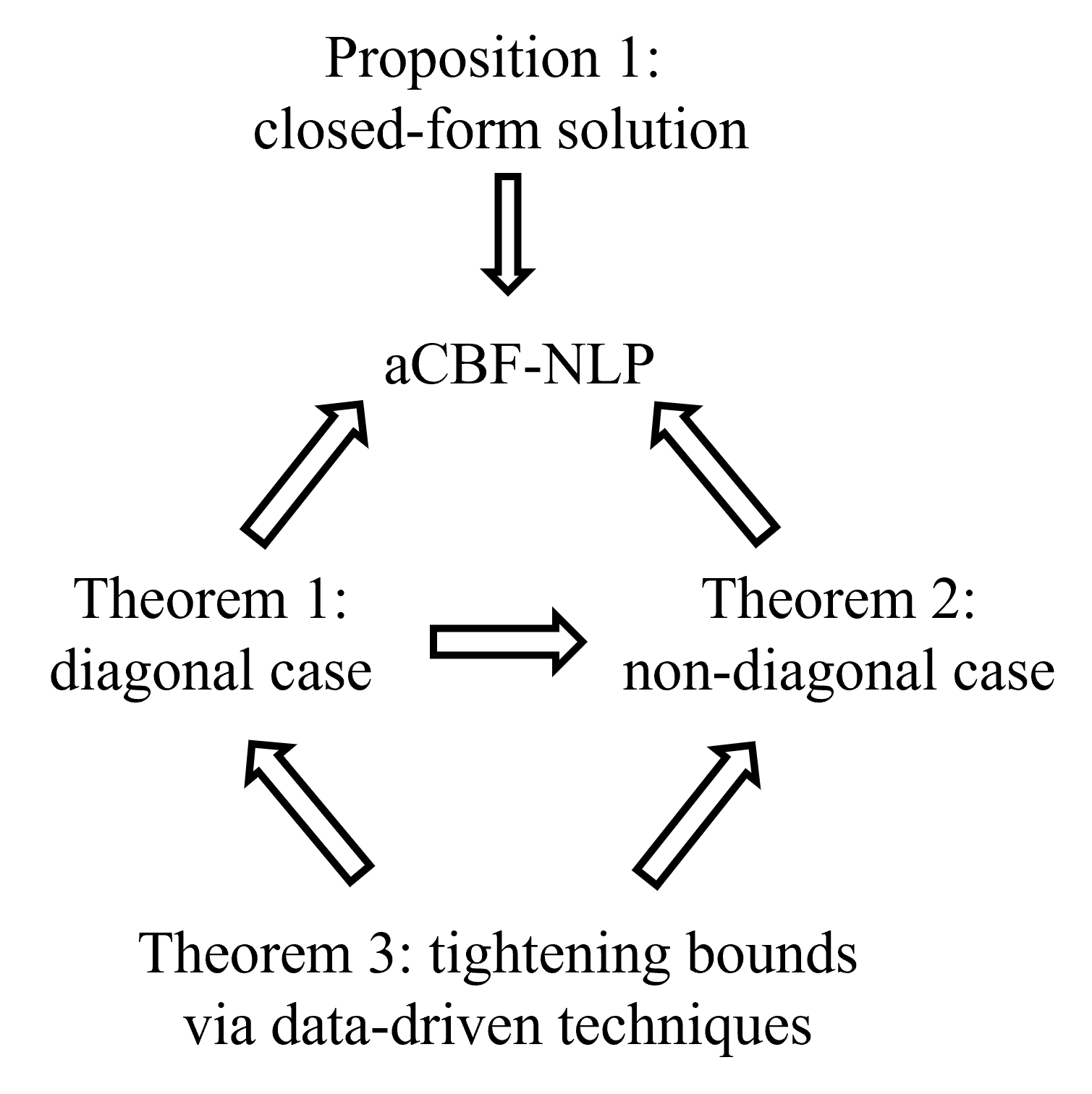}
\caption{Main results of this paper. }
\label{fig:illustraion}
\end{figure}

Inspired by the idea of adaptive control Lyapunov functions (aCLFs) \cite{KK95}, the adaptive CBF (aCBF) approach, which estimates the unknown parameters online to guarantee the safety of control affine systems with parametric uncertainties via a QP-based safe controller, is first proposed in \cite{taylor2020adaptive}.
In contrast to the aCLF-based stabilizing controller design, the aCBF-based safe control design is more challenging partially because the forward invariance of a predefined safe set must be ensured for all time and aCBFs do not  have the positive definiteness property possessed by aCLFs. Following the pioneering work of \cite{taylor2020adaptive}, various aCBF-based control methods are developed in the literature \cite{lopez2020robust,zhao2020adaptive,black2021fixed,isaly2021adaptive,cohen2022high,wang2022observer,wang2022robust,huang2022safety,azimi2021exponential,verginis2022funnel} and applied to several practical scenarios, such as adaptive cruise control \cite{taylor2020adaptive}, aircraft control \cite{lopez2020robust}, control of wing rock motion \cite{huang2022safety},  and control of unicycle vehicles \cite{verginis2022funnel}.
Most of these works only take into account parametric uncertainties in the drift term, while there are many physical systems that have parametric uncertainties in the control-input matrices, such as robotic systems with imprecise or time-varying mass and inertia parameters.
In \cite{azimi2021exponential}, a  filtering-based concurrent learning algorithm in the  CBF framework is proposed to design safe controllers for single-input-single-output systems with unknown control coefficients; the estimated parameter converges to the true value exponentially, but system safety is not guaranteed before the convergence of the parameter adaptations. In \cite{verginis2022funnel}, a zeroing CBF-based adaptive control algorithm is proposed to solve the funnel control problem for systems with parametrically uncertain control-input matrices, which can achieve tracking of a reference trajectory within a pre-defined funnel; however, this method may fail in singular configurations, as discussed in  Remark 1 of that paper. Despite these early contributions, the aCBF-based control design for systems with parametric uncertainties in control-input matrices is still an open field and merits further investigation.

Consider a control-affine system $\dot{x} = f(x) + g(x) u$ where $f$, $g$ include parametric uncertainties (e.g., $f$ and $g$ are identified by universal approximators such as neural networks). 
The main challenge of stabilizing such a system using adaptive controllers arises from the so-called ``loss of controllability'' problem; that is,  although the system is controllable, the identification model may lose its controllability at some points in time, owing to parameter adaptations \cite{bechlioulis2008robust,ioannou1996robust}.
The same issue could happen in the aCBF-based control design, which will result in the emptiness of the admissible safe control set and therefore, the infeasibility of the QP.
To the best of our knowledge, the \emph{singularity-free} aCBF-based safe controller is not yet developed in the literature, though
relevant stabilizing adaptive control schemes have been proposed in \cite{KKK95,xu2003robust,bechlioulis2008robust,ioannou1996robust}.
To bridge this gap, this paper proposes a singularity-free aCBF-based control design method for systems with parametric uncertainties in both $f$ and $g$. In contrast to the existing results (e.g., the approach developed in \cite{verginis2022funnel}) where the safety constraints (i.e., the CBF conditions) include  estimated parameters, the CBF condition of the proposed method only relies on the nominal values of the unknown parameters. Hence, the non-emptiness of the admissible safe control set can be verified in advance, and the singular configuration can be avoided. The safe control is obtained by solving a \emph{nonlinear program} (NLP), which has a closed-form solution. Furthermore, a data-driven approach is developed to reduce the potential conservatism of the proposed controller by tightening the parameter bounds.
The effectiveness of the proposed control strategy is demonstrated by numerical simulations. Main results of this paper are shown in Fig. \ref{fig:illustraion}.

The rest of this paper is structured as follows. In
Section \ref{sec:pre}, introduction to CBFs and the problem formulation are provided; in Section \ref{sec:main}, the proposed aCBF-based control approaches are presented; in Section \ref{sec:data}, a data-driven method that aims to reduce the conservatism of the proposed control methods is developed; in Section \ref{sec:simulation}, numerical simulation results that validate the proposed methods are presented; and finally, the conclusion is drawn in Section \ref{sec:conclusion}.

\section{Preliminaries \& Problem Statement}
\label{sec:pre}

\subsection{Notation}
For a positive integer $n$, denote $[n]=\{1,2,\cdots,n\}$. For a column vector $x\in\R^{n}$ or a row vector $x\in\R^{1\times n}$, $x_i$ denotes the $i$-th entry of $x$ and $\|x\|$ represents its 2-norm. For a given matrix $A\in\R^{n\times m}$, $A_{ij}$ denotes the $(i,j)$-th entry of the matrix $A$ and $\|A\|$ represents its Frobenius norm. Denote ${\bf 0}_{m}$ as a column vector of dimension $m$ whose entries are all zero, and ${\bf 0}_{m\times n}$ as a $m\times n$ matrix whose entries are all zero. Denote ${\rm diag}(a_1,a_2,\cdots,a_n)\in\R^{n\times n}$ as a diagonal matrix with diagonal entries $a_1,a_2,\cdots,a_n\in\R$.
Given vectors $x,y\in\R^n$, $x\leq y$ is satisfied in the entry-wise sense, i.e., $x_i\leq y_i$, $\forall i\in[n]$, and $x\odot y$ represents the Hadamard product (element-wise product) \cite{horn2012matrix}.
Denote the set of intervals on $\R$ by $\mathbb{IR}$, the set of $n$-dimensional interval
vectors by $\mathbb{IR}^n$, and the set of $n\times m$-dimensional interval matrices by $\mathbb{IR}^{n\times m}$.
The definition of interval operations, e.g., addition, substraction, multiplication, etc., follows those in \cite{moore2009introduction}.
Given two vectors $x,y\in\R^n$ and $x\leq y$, $[x,y]=\left[[x_1,y_1]\ \cdots\ [x_n,y_n]\right]^\top\in\mathbb{IR}^{n}$ represents an interval vector.
Consider the gradient $h_x\triangleq\frac{\pa h}{\pa x} \in \R^{n\times 1}$ as a row vector, where $x\in\R^n$ and $h:\R^n\to\R$ is a function with respect to $x$. 

\subsection{Control Barrier Function}
Consider a control affine system
\begin{equation}
\dot{x} = f(x) + g(x) u,\label{eqnsysp}
\end{equation}
where $x\in\mathbb{R}^n$ is the state, $u\in U\subset\mathbb{R}^m$ is the control input,  $f: \mathbb{R}^n\to\mathbb{R}^n$ and $g:\mathbb{R}^n\to\mathbb{R}^{n\times m}$ are locally Lipchitz continuous functions. 
Define a \emph{safe set} $\C=\{ x \in \R^n \mid h(x) \geq 0\}$
where $h$ is a continuously differentiable function.
The function $h$ is called a \emph{(zeroing) CBF} of relative degree 1, if  there exists a constant $\gamma>0$ such that $\sup_{u \in U}  \left[ L_f h(x) + L_g h(x) u + \gamma h(x)\right] \geq 0$
where $L_fh(x)=\frac{\pa h}{\pa x} f(x)$ and $L_gh(x)=\frac{\pa h}{\pa x} g(x)$ are Lie derivatives \cite{isidori1985nonlinear}. In this paper, we assume there is no constraint on the input $u$, i.e., $U=\R^m$. For any given $x\in\R^n$, the set of all control values that satisfy the CBF condition is defined as
$K(x) =  \{ u \in U \mid L_f h(x) + L_g h(x) u + \gamma h(x) \geq 0\}.$
It was proven in \cite{Xu2015ADHS} that any Lipschitz continuous controller $u(x) \in K(x)$ will guarantee the forward invariance of $\C$, i.e., the \emph{safety} of the closed-loop system. The provably safe control law is obtained by solving a convex QP that includes the CBF condition as its constraint.  The time-varying CBF with a general relative degree and its safety guarantee for a time-varying system are discussed in \cite{xu2018constrained}. 

\subsection{Problem Formulation}

Consider the following system:
\begin{IEEEeqnarray}{rCl}\label{eqnsys2}
\hspace{-2mm}
\begin{pmatrix}
    \dot x_1\\ \dot x_2
\end{pmatrix} &=
    f(x) \!+\!f_u(x) +
\begin{pmatrix}
    {\bf 0}_{m}\\ f_{\theta}(x)
\end{pmatrix}
+ \begin{pmatrix}
    {\bf 0}_{m\times n}\\ g(x) +g_{\lambda}(x)
\end{pmatrix}u,
\end{IEEEeqnarray}
where $x= \begin{pmatrix}x_1\\x_2\end{pmatrix}\in\R^{m+n}$ is the state with $x_1\in\R^m$ and  $x_2\in\R^n$,  $u\in\R^{n}$ is the control input,
$f:\R^{m+n}\to\R^{m+n}$ and $g:\R^{m+n}\to\R^{n\times n}$ are known Lipschitz functions, $f_u:\R^{m+n}\to\R^{m+n}$ is an unknown Lipschitz function, and $f_\ta:\R^{m+n}\to\R^{n}$ and $g_\la:\R^{m+n}\to\R^{n\times n}$ are parametric uncertainties.  We assume that $f_\ta$, $g$, and $g_\la$ have the following forms:
\begin{IEEEeqnarray}{rCl}
\IEEEyesnumber \label{eqnstructure}
\IEEEyessubnumber
f_\theta(x)&=&\begin{bmatrix}\theta_1^\top\varphi_1(x),&\theta_2^\top\varphi_2(x),&
 \cdots,&
 \theta_{n}^\top\varphi_{n}(x)
 \end{bmatrix}^\top,\\
 \IEEEyessubnumber
 g(x)&=&{\rm diag}(g_1(x),g_2(x),\cdots,g_n(x)),\\
 \IEEEyessubnumber
 g_\la(x)&=&{\rm diag}(\lambda_1^\top\psi_1(x),\lambda_2^\top\psi_2(x),\cdots,\la_{n}^\top\psi_{n}(x)),
\end{IEEEeqnarray}
where $g_i:\R^{m+n}\to\R$ is a known Lipschitz function, $\ta_i\in\R^{p_i}$ and $\la_{i}\in\R^{q_i}$ are unknown parameters, and $\varphi_i:\R^{m+n}\to\R^{p_i}$ and $\psi_i:\R^{m+n}\to\R^{q_i}$ are known Lipschitz functions (regressors) with $p_i,q_i$ appropriate positive integers and $i\in[n]$. Note that the functions $f,g,f_u,\varphi_i, \psi_i$, $i\in[n]$, are assumed to be Lipschitz continuous to ensure the existence and uniqueness of the solution to \eqref{eqnsys2}.
Define a safe set $\C\subset\R^{m+n}$ as
\begin{equation}\label{setc}
    \C=\{x: h(x)\geq 0\},
\end{equation}
where $h:\R^{m+n}\to\R$ is a continuously differentiable function.
We also make the following two assumptions on the boundedness of the unknown function $f_u$ and the unknown parameters $\ta_i,\la_i$.

\begin{assumption}\label{assump:0}
There exist known functions $\underline{f}_u(x)$, $\overline{f}_u(x):\R^{m+n}\to\R^{m+n}$ such that $\underline{f}_u(x)\leq f_u(x)\leq\overline{f}_u(x)$.
\end{assumption}

\begin{assumption}\label{assump:1}
For every $i\in[n]$, there exist known vectors $\overline{\theta}_i, \underline{\theta}_i\in\mathbb{R}^{p_i}$ and $\overline{\la}_i,\underline{\la}_i\in\mathbb{R}^{q_i}$ such that  $\underline{\theta}_i\leq\theta_i \leq\overline{\theta}_i$ and  $\underline{\lambda}_i\leq \la_i\leq\overline{\lambda}_i$.
\end{assumption}

\begin{remark}\label{remark1}
In the adaptive stabilizing control design problem,  bounds for the unknown parameters as given in Assumption \ref{assump:1} are not necessarily required to be known since the asymptotic stability of the closed-loop system can be proven using Barbalat's lemma  when the derivative of the Lyapunov function is negative semi-definite \cite{khalil2002nonlinear}. Because CBFs do not have the favourable positive definiteness property as Lyapunov functions, the CBF-based safe control design is more challenging. Although an aCBF-based control approach is proposed in \cite{taylor2020adaptive} without assuming boundedness of the unknown parameters, its performance is conservative as the system only operates in a subset of the original safety set. In \cite{lopez2020robust}, a robust aCBF-based controller is developed under the assumption that is similar to Assumption \ref{assump:1}, i.e.,
the unknown parameters and the parameter estimation error both belong to known closed convex sets; however, the system model considered there does not include the parametric uncertainty $g_\la$ in the control-input matrix.
\end{remark}

The main problem that will be investigated in this paper is stated as follows.
\begin{problem}\label{prob1}
Consider the system \eqref{eqnsys2} with $f_\ta$, $g$, and $g_\la$ given in \eqref{eqnstructure} and the safe set defined in \eqref{setc} where $h$ has a relative degree 1. Suppose that Assumptions \ref{assump:0} and \ref{assump:1} hold. Design a feedback controller $u$ such that the closed-loop system is always safe, i.e., $h(x(t))\geq 0$ for all $t\geq 0$.
\end{problem}

We will propose an aCBF-NLP-based method for solving Problem \ref{prob1} in Section \ref{sec:mainadaptive} and generalize it to the case where $g$ and $g_\la$ are non-diagonal in Section \ref{sec:maingeneral}.
Moreover, although we only consider the CBF $h$ with a relative degree 1 in this work, our results can be easily extended to the higher relative degree cases by using techniques in \cite{xu2018constrained,nguyen2016exponential,tan2021high}; a mass-spring system that has a relative degree 2 will be shown in Example \ref{example3} of Section \ref{sec:simulation}.

\section{aCBF-NLP-based Safe Control Design}
\label{sec:main}
In this section, the main result of this work will be presented. In Section \ref{sec:mainadaptive}, an aCBF-NLP-based safe control design approach will be proposed to solve Problem \ref{prob1}; in Section \ref{sec:nlpsolution}, the closed-form solution to the NLP will be presented; in Section \ref{sec:maingeneral}, the proposed method is extended to a more general class of systems.

\subsection{aCBF-NLP-Based Control Design}
\label{sec:mainadaptive}

In this subsection, an aCBF-NLP-based control design method is proposed to solve Problem \ref{prob1}. Recall that $f_\ta$, $g$, $g_\la$ have the forms given in \eqref{eqnstructure} where $\theta_i\in\R^{p_i}$ and $\la_i\in\R^{q_i}$ are unknown parameters.
We choose arbitrary values $\theta^0_i\in\R^{p_i}$ and $\la^0_i\in\R^{q_i}$  satisfying $ \underline{\theta}_i\leq\theta^0_i\leq\overline{\theta}_i$ and $ \underline{\la}_i\leq\la^0_i\leq\overline{\la}_i$ as the nominal values for $\theta_i$ and $\la_i$, respectively. Furthermore, we define
\begin{align}\label{eqmunv}
\mu_i\triangleq\|\theta_i-\theta^0_i\|, \quad \nu_i\triangleq\|\la_i-\la_i^0\|,\quad  \forall i\in[n].
\end{align}
According to Assumption \ref{assump:1} and the definition of 2-norm, 
\begin{align*}
\mu_i\!\leq\! \bar{\mu}_i\!\triangleq\! \sqrt{\!\sum_{j=1}^{p_i}\max\{((\overline{\theta}_i)_j\!-\!({\theta}_i^0)_j)^2,((\underline{\theta}_i)_j\!-\!({\theta}_i^0)_j)^2\}},\\
\nu_i\!\leq\! \bar{\nu}_i\!\triangleq\!\sqrt{\!\sum_{j=1}^{q_i}\max\{((\overline{\la}_i)_j\!-\!({\la}_i^0)_j)^2,((\underline{\la}_i)_j\!-\!({\la}_i^0)_j)^2\}},
\end{align*}
where $(\overline \theta_i)_j$, $(\underline  \theta_i)_j$, $(\overline \la_i)_j$, $(\underline  \la_i)_j$ denote the $j$-th entry of $\overline\theta_i$, $\underline\theta_i$, $\overline \la_i$, $\underline\la_i$, respectively. Note that in this paper the adaptive laws are used to estimate parameters $\mu_i$ and $\nu_i$, which are \emph{scalars}, rather than parameters $\theta_i$ and $\la_i$, which are vectors. The following assumption assumes that each
diagonal entry of $g(x)+g_\la(x)$ is away from zero.

\begin{assumption}\label{assump:2}
Given functions $g(x),g_\la(x)$ in  diagonal forms as shown in \eqref{eqnstructure}, there exist constants $b_1,...,b_n>0$ such that $g_i(x)+\la_i^\top\psi_i(x)$ 
satisfies $|g_i(x)+\la_i^\top\psi_i(x)|\geq b_i$ for any $i\in[n]$ and any $x\in\C$. Moreover, the sign of $g_i(x)+\la_i^\top\psi_i(x)$ is known, and without loss of generality, it is assumed that $g_i(x)+\la_i^\top\psi_i(x)>0$ for any $i\in[n]$ and $x\in\C$.
\end{assumption}

\begin{remark}
The condition $|g_i(x)+\la_i^\top\psi_i(x)|\geq b_i$ in Assumption \ref{assump:2} is imposed to avoid the loss of controllability problem \cite{xu2003robust,bechlioulis2008robust}. In Section \ref{sec:maingeneral}, Assumption \ref{assump:2} is relaxed to Assumption \ref{assump:4} for a more general class of systems (i.e., $g(x),g_\la(x)$ are not diagonal).  However, the safe controller constructed under Assumption \ref{assump:2} (cf. Theorem \ref{theorem1}) tends to have a less conservative performance than that under Assumption \ref{assump:4} (cf. Theorem \ref{theorem3}); see Remark \ref{remark:generaldisadvantage} and Example \ref{example3} for more details.
\end{remark}

The following theorem shows an aCBF-based controller that ensures the safety of system \eqref{eqnsys2}.

\begin{theorem}\label{theorem1}
Consider the system \eqref{eqnsys2} with $f_\theta,g,g_\la$ specified in \eqref{eqnstructure} and the safe set $\C$ defined in \eqref{setc}. Suppose that\\
(i) Assumptions \ref{assump:0}, \ref{assump:1} and \ref{assump:2} hold;\\
(ii) There exist positive constants $\gamma,\epsilon_1,\epsilon_2,\gamma^\theta_i,\gamma^{\lambda}_i>0$ where $i\in[n]$, such that the following set is non-empty:
\begin{equation}\label{cbfcondition1}
K_{BF}(x)\triangleq\left\{{\mathfrak u}\in\R^n\mid \Psi_{0}(x)+\Psi_{1}(x) {\mathfrak u}\geq 0\right\},\;\forall x\in \C,
\end{equation}
where $\Psi_{0}(x)=\mathscr{M}+\sum_{i=1}^{n}h_{x_2,i}\theta_i^{0\top}\varphi_i-n(\ep_1+\ep_2)+\ga\left[h-\sum_{i=1}^{n}\left(\frac{\bar{\mu}_{i}^2}{2\ga_{i}^\theta}+\frac{\bar{\nu}_{i}^2}{2\ga_{i}^\la}\right)\right]$,
$\Psi_{1}(x)=[h_{x_2,1}^2(g_{1}+\la_1^{0\top} \psi_1) \\ h_{x_2,2}^2(g_{2}+\la_2^{0\top} \psi_2)\ \cdots \ h_{x_2,n}^2(g_n+\la_{n}^{0\top} \psi_{n})]$,
$\mathscr{M}=h_{x}f+\sum_{j=1}^{m+n}\min\{h_{x,j}\underline{f}_{u,j},h_{x,j}\overline{f}_{u,j}\}$, $h_x=\frac{\pa  h}{\pa x}$, $h_{x_2}=\frac{\pa h}{\pa x_2}$, and $h_{x,i}$, $h_{x_2,i}$, $\underline{f}_{u,i}$, $\overline{f}_{u,i}$ denotes the $i$-th entry of $h_x$, $h_{x_2}$, $\underline{f}_u$, $\overline{f}_u$, $i\in[n]$, respectively;\\
(iii) For any $i\in[n]$, $\hat\mu_i$ and $\hat\nu_i$ are estimated parameters governed by the following adaptive laws:
\begin{subequations}\label{adaptive:both}
\begin{align}
\dot {\hat{\mu}}_{i} &=-{\gamma} \hat\mu_{i}+\gamma^{\theta}_i|h_{x_2,i}|\|\varphi_i\|, \label{adaptivelaw1}\\
\dot {\hat{\nu}}_{i} &=-{\gamma} \hat\nu_{i}+\gamma^{\lambda}_ih_{x_2,i}^2|u_{0,i}|\|\psi_i\|,\label{adaptivelaw2}
\end{align}
\end{subequations}
where $\hat\mu_{i}(0)>0,\hat\nu_{i}(0)>0$ and $u_0=[u_{0,1},\cdots,u_{0,n}]^\top$ is a Lipschitz function satisfying $u_0 \in K_{BF}(x)$;\\
(iv) The following inequality holds:

$
h(x(0))\geq \sum_{i=1}^{n}\left( \frac{\hat\mu_i(0)^2+\bar{\mu}_{i}^2}{2\gamma^{\theta}_i}+\frac{\hat\nu_i(0)^2+\bar{\nu}_{i}^2}{2\gamma^{\lambda}_i} \right)
$;\\
\noindent Then, the control input $u=h_{x_2}^\top\odot s(u_0)\in\R^n$ will make $h(x(t))\geq 0$ for $t>0$, where
$s(u_0)\triangleq[s_1(u_{0,1}),s_2(u_{0,2}),\cdots,s_n(u_{0,n})]^\top$ and
\begin{equation}
s_i(u_{0,i})\triangleq u_{0,i}+\frac{\kappa_{1,i}}{b_i}+\frac{\kappa_{2,i}^2u^2_{0,i}}{b_i(\kappa_{2,i}|h_{x_2,i}||u_{0,i}|+\epsilon_2)},\label{adapcontrol}
\end{equation}
with $\kappa_{1,i}=\frac{\hat\mu_{i}^2\|\varphi_i\|^2}{\hat\mu_{i}\|\varphi_i\||h_{x_2,i}|+\epsilon_1}$, $\kappa_{2,i}=\hat\nu_{i}\|\psi_i\||h_{x_2,i}|$, $i\in[n]$.
\end{theorem}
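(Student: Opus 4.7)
The plan is to construct a barrier--Lyapunov function
$V(t) \triangleq h(x(t)) - \sum_{i=1}^n \tilde\mu_i^2/(2\ga^\ta_i) - \sum_{i=1}^n \tilde\nu_i^2/(2\ga^\la_i)$
with $\tilde\mu_i \triangleq \hat\mu_i-\mu_i$ and $\tilde\nu_i \triangleq \hat\nu_i-\nu_i$, and to prove the two facts $V(0)\ge 0$ and $\dot V \ge -\ga V$. The comparison lemma then yields $V(t)\ge V(0)e^{-\ga t}\ge 0$, whence $h(x(t)) \ge \sum_i[\tilde\mu_i^2/(2\ga^\ta_i)+\tilde\nu_i^2/(2\ga^\la_i)] \ge 0$, which is the claim. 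The initial-condition inequality $V(0)\ge 0$ will follow from hypothesis (iv) via the elementary bound $(\hat\mu_i(0)-\mu_i)^2\le \hat\mu_i(0)^2+\mu_i^2\le \hat\mu_i(0)^2+\bar\mu_i^2$, valid because $\hat\mu_i(0)>0$ and $\mu_i\ge 0$ (and similarly for $\nu$).

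The substantive work is lower-bounding $\dot h$. I would expand $\dot h = h_xf + h_xf_u + \sum_i h_{x_2,i}\ta_i^\top\varphi_i + \sum_i h_{x_2,i}^2(g_i+\la_i^\top\psi_i)\,s_i(u_{0,i})$, split $\ta_i=\ta_i^0+(\ta_i-\ta_i^0)$ and $\la_i=\la_i^0+(\la_i-\la_i^0)$, use Assumption \ref{assump:0} entrywise to produce $h_x(f+f_u)\ge \mathcal{M}$, and apply Cauchy--Schwarz to obtain $h_{x_2,i}(\ta_i-\ta_i^0)^\top\varphi_i\ge -\mu_i|h_{x_2,i}|\|\varphi_i\|$ and $h_{x_2,i}^2(\la_i-\la_i^0)^\top\psi_i u_{0,i}\ge -\nu_i h_{x_2,i}^2|u_{0,i}|\|\psi_i\|$.

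The main algebraic hurdle is extracting usable lower bounds from the two nonlinear compensating terms inside $s_i(u_{0,i})$. Here I would use $(g_i+\la_i^\top\psi_i)/b_i\ge 1$ from Assumption \ref{assump:2} to discard the unknown gain, then twice invoke the identity $\ep a/(a+\ep)\le \ep$ (equivalently $a^2/(a+\ep)\ge a-\ep$): once with $a=\hat\mu_i\|\varphi_i\||h_{x_2,i}|$ and $\ep=\ep_1$, giving $h_{x_2,i}^2\ka_{1,i}\ge \hat\mu_i|h_{x_2,i}|\|\varphi_i\|-\ep_1$; and once with $a=\ka_{2,i}|h_{x_2,i}||u_{0,i}|=\hat\nu_i\|\psi_i\|h_{x_2,i}^2|u_{0,i}|$ and $\ep=\ep_2$, giving $h_{x_2,i}^2\ka_{2,i}^2u_{0,i}^2/(\ka_{2,i}|h_{x_2,i}||u_{0,i}|+\ep_2)\ge \hat\nu_i\|\psi_i\|h_{x_2,i}^2|u_{0,i}|-\ep_2$. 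Combining these with the CBF inequality $\Psi_0(x)+\Psi_1(x)u_0\ge 0$ of hypothesis (ii) cancels the $\mathcal{M}$, $\ta^0$, and $n(\ep_1+\ep_2)$ contributions and produces $\dot h \ge -\ga h + \ga c + \sum_i(\hat\mu_i-\mu_i)|h_{x_2,i}|\|\varphi_i\| + \sum_i(\hat\nu_i-\nu_i)h_{x_2,i}^2|u_{0,i}|\|\psi_i\|$, where $c\triangleq\sum_i[\bar\mu_i^2/(2\ga^\ta_i)+\bar\nu_i^2/(2\ga^\la_i)]$.

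Finally, differentiating $V$ and substituting the adaptive laws \eqref{adaptivelaw1}--\eqref{adaptivelaw2} cancels exactly the two sign-indefinite $(\hat\mu_i-\mu_i)$ and $(\hat\nu_i-\nu_i)$ terms and leaves a residual $\ga\sum_i\tilde\mu_i\hat\mu_i/\ga^\ta_i + \ga\sum_i\tilde\nu_i\hat\nu_i/\ga^\la_i$. The identity $\tilde\mu_i\hat\mu_i=\tilde\mu_i^2/2+(\hat\mu_i^2-\mu_i^2)/2\ge\tilde\mu_i^2/2-\bar\mu_i^2/2$ (and its $\nu$ analogue) converts this residual into $\ga\sum_i\tilde\mu_i^2/(2\ga^\ta_i)+\ga\sum_i\tilde\nu_i^2/(2\ga^\la_i)-\ga c$, and adding the earlier $-\ga h+\ga c$ collapses everything to $\dot V\ge -\ga V$, closing the argument. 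I expect the most delicate part to be the bookkeeping in this last cancellation, where the $\mathcal{M}$, $\ta^0$, $\ep_1$, $\ep_2$, and $\ga c$ contributions must line up exactly; checking that the denominators in $\ka_{1,i}$, $\ka_{2,i}$ never vanish (they are bounded below by $\ep_1,\ep_2>0$) simultaneously confirms the controller is singularity-free, as advertised.
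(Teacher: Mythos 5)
Your proposal is correct and follows essentially the same route as the paper's proof: the same barrier function $\bar h = h - \sum_{i=1}^n\bigl(\tilde\mu_i^2/(2\gamma_i^\theta)+\tilde\nu_i^2/(2\gamma_i^\lambda)\bigr)$, the same use of Assumption \ref{assump:2} to discard the unknown gains on the compensating terms, Cauchy--Schwarz on the parameter-error terms, the bound $\epsilon a/(a+\epsilon)\le\epsilon$ for both $\kappa_{1,i}$ and $\kappa_{2,i}$ terms, substitution of the adaptive laws, and the Young-type estimate on $\tilde\mu_i\hat\mu_i$, culminating in $\dot{\bar h}\ge-\gamma\bar h$ and the comparison argument. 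The only step you leave implicit---that $\hat\mu_i(t),\hat\nu_i(t)\ge 0$ for all $t$, which your denominator bounds and sign arguments presuppose and which follows from $\dot{\hat\mu}_i\ge-\gamma\hat\mu_i$, $\dot{\hat\nu}_i\ge-\gamma\hat\nu_i$, the positive initial conditions, and the comparison lemma---is exactly how the paper opens its proof.
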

\begin{proof}
From \eqref{adaptive:both}, $\dot {\hat{\mu}}_{i} \geq -{\gamma} \hat\mu_{i},\dot {\hat{\nu}}_{i} \geq -{\gamma} \hat\nu_{i}$ hold. Since $\hat\mu_{i}(0)>0,\hat\nu_{i}(0)>0$, it is easy to see that $\hat\mu_{i}(t)\geq 0$ and $\hat\nu_{i}(t)\geq 0$ for any $t>0$ by the Comparison Lemma \cite[Lemma 2.5]{khalil2002nonlinear}. Define a new candidate CBF $\bar h$ as
$\bar{h}(x,t)=h(x)-\sum_{i=1}^{n}\bigg(\frac{\tilde{\mu}_{i}^2}{2\gamma^{\theta}_i} +\frac{\tilde{\nu}_{i}^2}{2\gamma^{\lambda}_i} \bigg)$,
where $\tilde{\mu}_{i}={\mu}_{i}-\hat{\mu}_{i}$ and $\tilde{\nu}_i={\nu}_i-\hat{\nu}_i$.
It can be seen that $\bar{h}(x(0),0) = h(x(0))-\sum_{i=1}^{n}\bigg( \frac{(\mu_i-\hat\mu_i(0))^2}{2\gamma^{\theta}_i} +\frac{(\nu_i-\hat\nu_i(0))^2}{2\gamma^{\lambda}_i} \bigg)
\geq h(x(0))-\sum_{i=1}^{n}\bigg( \frac{\mu_i^2+\hat\mu_i^2(0)}{2\gamma^{\theta}_i} +\frac{\nu_i^2+\hat\nu_i^2(0)}{2\gamma^{\lambda}_i} \bigg)
\geq h(x(0))-\sum_{i=1}^{n}\bigg( \frac{\bar\mu_i^2+\hat\mu_i^2(0)}{2\gamma^{\theta}_i} +\frac{\bar \nu_i^2+\hat\nu_i^2(0)}{2\gamma^{\lambda}_i} \bigg)\geq 0$, where the first inequality comes from the fact that $\mu_i,\nu_i,\hat\mu_i(0),\hat\nu_i(0)\geq 0$, the second one arises from the definitions of $\bar{\mu}_i,\bar{\nu}_i$, 
and the last one holds because of condition (iv).

We claim that $\dot{\bar{h}}\geq-\gamma \bar h$ where $\dot{\bar{h}}$ is the time derivative of $\bar h$. Indeed, $\dot{\bar{h}}
=h_x(f+f_u)+ \sum_{i=1}^{n} (h_{x_2,i}(\theta_i^\top\varphi_i+(g_{i}+\la_i^\top \psi_i)u_i)+\frac{\tilde{\mu}_{i}\dot{\hat\mu}_{i}}{\gamma^{\theta}_i} +\frac{\tilde{\nu}_{i}\dot{\hat{\nu}}_{i}}{\gamma^{\lambda}_i} )
\geq \mathscr{M}+\sum_{i=1}^{n} \left(h_{x_2,i}(\theta_i^\top\varphi_i+(g_{i}+\la_i^\top \psi_i)u_i)+\frac{\tilde{\mu}_{i}\dot{\hat{\mu}}_{i}}{\gamma^{\theta}_i} +\frac{\tilde{\nu}_{i}\dot{\hat{\nu}}_{i}}{\gamma^{\lambda}_i} \right)$. Substituting \eqref{adapcontrol} into the inequality above and recalling Assumption \ref{assump:2}, we have
\begin{IEEEeqnarray}{rCl}
\dot{\bar{h}}
&\geq&\mathscr{M}+\sum_{i=1}^{n}\bigg(h_{x_2,i}\theta_i^\top\varphi_i+\frac{\tilde{\mu}_{i}\dot{\hat{\mu}}_{i}}{\gamma^{\theta}_i} +\frac{\tilde{\nu}_{i}\dot{\hat{\nu}}_{i}}{\gamma^{\lambda}_i}\bigg)\nonumber\\
&&\!+\!\sum_{i=1}^{n}\!\!h_{x_2,i}^2\bigg((g_i\!+\!\la_i^\top \psi_i)u_{0,i}\!+\!\kappa_{1,i}\!+\!\frac{\kappa_{2,i}^2u^2_{0,i}}{\kappa_{2,i}|h_{x_2,i}||u_{0,i}|\!+\!\epsilon_2}\bigg)\nonumber\\
&\geq&\mathscr{M}+\sum_{i=1}^{n}\left(h_{x_2,i}\theta_i^{0\top}\varphi_i+h_{x_2,i}^2(g_{i}+\la_i^{0\top} \psi_i)u_{0,i}\right)\nonumber\\
&&\!+\!\sum_{i=1}^{n}\!\left(\!\frac{\tilde{\mu}_{i}\dot{\hat{\mu}}_{i}}{\gamma^{\theta}_i} \!+\!\frac{\tilde{\nu}_{i}\dot{\hat{\nu}}_{i}}{\gamma^{\lambda}_i}\! +\!h_{x_2,i}^2\!\left(\!\kappa_{1,i}\!+\!\frac{\kappa_{2,i}^2u^2_{0,i}}{\kappa_{2,i}|h_{x_2,i}||u_{0,i}|\!+\!\epsilon_2}\!\right)\!\right)\nonumber\\
&&+\sum_{i=1}^{n}(h_{x_2,i}
(\theta_i-\theta_i^0)^\top\varphi_i+h^2_{x_2,i}(\la_i-\la_i^0)^\top\psi_i u_{0,i}
)\nonumber\\
&\geq& \Psi_0+\Psi_1 u_0+n(\ep_1+\ep_2)-\ga\bigg[h-\sum_{i=1}^{n}\bigg(\frac{\bar{\mu}_{i}^2}{2\ga_{i}^\theta}+\frac{\bar{\nu}_{i}^2}{2\ga_{i}^\la}\bigg)\bigg]\nonumber\\
&&+\sum_{i=1}^{n}\bigg(\!\frac{\tilde{\mu}_{i}\dot{\hat{\mu}}_{i}}{\gamma^{\theta}_i} \!+\!\frac{\tilde{\nu}_{i}\dot{\hat{\nu}}_i}{\gamma^{\lambda}_i}\! \bigg)+\sum_{i=1}^{n}\bigg(\!-\mu_{i}\|\varphi_i\||h_{x_2,i}|\!+\!\kappa_{1,i}h_{x_2,i}^2\nonumber\\
&&-\nu_{i}\|\psi_i\||u_{0,i}|h_{x_2,i}^2+\frac{\kappa_{2,i}^2 u_{0,i}^2 h_{x_2,i}^2}{\kappa_{2,i}|h_{x_2,i}||u_{0,i}|+\epsilon_2}\bigg),\label{dotbarh11}
\end{IEEEeqnarray}
where the third inequality arises from Cauchy–Schwarz inequality. It is easy to check that $-\hat\mu_{i}\|\varphi_i\||h_{x_2,i}|+\kappa_{1,i}h_{x_2,i}^2=-\frac{\hat\mu_{i}\|\varphi_i\||h_{x_2,i}|\epsilon_1}{\hat\mu_{i}\|\varphi_i\||h_{x_2,i}|+\epsilon_1}\geq -\epsilon_1$ and $-\hat\nu_{i}\|\psi_i\||u_{0,i}|h_{x_2,i}^2+\frac{\kappa_{2,i}^2 u_{0,i}^2 h_{x_2,i}^2}{\kappa_{2,i}|h_{x_2,i}||u_{0,i}|+\epsilon_2}=-\frac{\kappa_{2,i}|h_{x_2,i}||u_{0,i}|\epsilon_2}{\kappa_{2,i}|h_{x_2,i}||u_{0,i}|+\epsilon_2}\geq -\epsilon_2$; furthermore, $\Psi_0+\Psi_1 u_0\geq 0$ because $u_0\in K_{BF}$. Based on these two facts and recalling that $\mu_i=\tilde\mu_i+\hat\mu$, ${\nu}_i=\tilde{\nu}_i+\hat{\nu}_i$, one can see that \eqref{dotbarh11} can be expressed as $\dot{\bar{h}}\geq\sum_{i=1}^{n}\!\left(\!
\tilde{\mu}_{i}\!\bigg(\!\frac{1}{\gamma_{i}^\theta}\dot{\hat{\mu}}_{i}\!-\!\|\varphi_i\||h_{x_2,i}|\!\right)\!+\!\tilde{\nu}_{i}\left(\!\frac{1}{\gamma_{i}^\la}\dot{\hat{\nu}}_{i}\!-\!\|\psi_i\||u_{0,i}|h_{x_2,i}^2\!\bigg)\!
\right)-\ga\left[h-\sum_{i=1}^{n}\left(\frac{\bar{\mu}_{i}^2}{2\ga_{i}^\theta}+\frac{\bar{\nu}_{i}^2}{2\ga_{i}^\la}\right)\right]$. Substituting \eqref{adaptive:both} into the inequality above yields $\dot{\bar{h}}\geq -\ga\sum_{i=1}^{n}\left(\frac{\tilde{\mu}_{i}\hat{\mu}_{i}}{\ga_{i}^\theta} +\frac{\tilde{\nu}_{i}\hat{\nu}_{i}}{\ga_{i}^\la} \right)-\ga\left[h-\sum_{i=1}^{n}\left(\frac{\bar{\mu}_{i}^2}{2\ga_{i}^\theta}+\frac{\bar{\nu}_{i}^2}{2\ga_{i}^\la}\right)\right]$. Since $\hat{\mu}_{i}\tilde{\mu}_{i}=({\mu}_{i}-\tilde{\mu}_{i})\tilde{\mu}_{i}\leq \frac{\mu_{i}^2-\tilde{\mu}_{i}^2}{2}\leq \frac{\bar{\mu}_{i}^2-\tilde{\mu}_{i}^2}{2}$ and $\hat{\nu}_{i}\tilde{\nu}_{i}=({\nu}_{i}-\tilde{\nu}_{i})\tilde{\nu}_{i}\leq \frac{{\nu}_{i}^2-\tilde{\nu}_{i}^2}{2}\leq \frac{\bar{\nu}_{i}^2-\tilde{\nu}_{i}^2}{2}$,
we have $\dot{\bar{h}}\geq-\gamma\left[h-\sum_{i=1}^{n}\left(\frac{\tilde{\mu}_{i}^2}{2\gamma^{\theta}_i} +\frac{\tilde{\nu}_{i}^2}{2\gamma^{\lambda}_i} \right)\right]=-\gamma \bar h$, which shows the correctness of the claim.

Because $\bar h(x(0),0)\geq 0$, it is easy to see that $\bar h(t)\geq 0$ for $t>0$.  Since $\bar h\leq h$ by definition, we have $h(t)\geq 0$ for $t>0$, which completes the proof.
\end{proof}

\begin{remark}\label{remark2Kbf}
It should be noticed that the CBF condition $\Psi_{0}(x)+\Psi_{1}(x) {\mathfrak u}\geq 0$  shown in \eqref{cbfcondition1} is imposed on the intermediate variable, $u_0$,  instead of the real control input, $u$. Furthermore, the CBF condition \eqref{cbfcondition1} only relies on the nominal values of the unknown parameters, which implies that the CBF condition (i.e., the non-emptiness of the set  $K_{BF}$) can be verified conveniently by selecting the variables in Condition (ii) appropriately. 

In \cite{KK95}, the problem of adaptive stabilization of a nonlinear system is converted to the
nonadaptive stabilization of a modified system by utilizing an aCLF. While the idea of \cite{KK95} 
may be extended to develop an aCBF-based safe control law for \eqref{eqnsys2}, 
the resulting CBF condition would need to be verified for any $\theta_i$ and $\la_i$ satisfying $\underline{\theta}_i\leq\theta_i\leq\overline{\theta}_i$ and $\underline{\la}_i\leq\la\leq\overline{\la}_i$, which is much more restrictive than the CBF condition given in Theorem \ref{theorem1} above. On the other hand, 
the CBF condition given in \cite{verginis2022funnel} relies on  estimated parameters (i.e., $\Psi_0$ and $\Psi_1$ are functions of the estimated parameters in
the adaptive laws), which renders the singular configuration (i.e., the set  $K_{BF}$ is empty) difficult to verify; see the discussion in Remark 1 of \cite{verginis2022funnel}.
\end{remark}

\begin{remark}
The number of ODEs for parameter estimation in Theorem \ref{theorem1} is much less than that in other aCBF-based approaches such as \cite{taylor2020adaptive,lopez2020robust,verginis2022funnel}. As can be seen from the adaptive laws shown in \eqref{adaptive:both}, our method only requires solving $2n$ ODEs that estimate \emph{scalars} $\mu_i\in\R$ and $\nu_i\in\R$ for $i\in[n]$ (cf.  \eqref{eqmunv}); in contrast, other aCBF methods have to estimate the original unknown parameters $\theta_i\in\R^{p_i}$ and $\la_i\in\R^{q_i}$ for $i\in[n]$, which results in a total of $2n\sum_{j=1}^{n}(p_j+q_j)$ ODEs. This reduction of number of ODEs is particularly useful when $p_i$ and $q_i$ are large, e.g., when $\theta_i$ and $\la_i$ are weights of deep neural networks.
\end{remark}

The safe control law $u(x)\triangleq[u_1,\dots,u_n]^\top$ in Theorem \ref{theorem1} can be obtained pointwise for any $x\in\C$. Specifically, each $u_i,i\in[n],$ can be obtained by solving the following  optimization problem:
\begin{align}
\min_{u_{i}\in\R} \quad & (u_i-u_{d,i})^2\label{cbfnlp1}\tag{aCBF-NLP}\\
\textrm{s.t.}\quad &  \Phi^i_0(x)+\Phi^i_1(x) u_{0,i}\geq 0,\nonumber\\
\quad & u_i=h_{x_2,i} s_i(u_{0,i}),\nonumber
\end{align}
where $s_i(\cdot)$ is the function defined in \eqref{adapcontrol}, $u_{d,i}$ is the $i$-th entry of the nominal controller,
\begin{IEEEeqnarray}{rCl}
\IEEEyesnumber \label{psij}
\IEEEyessubnumber \label{psi0j}
\Phi_0^i &=&\begin{cases} \frac{\rho_ih_{x_2,i}^2|g_i+\la_i^{0\top} \psi_i|}{\sum_{j=1}^{n}\rho_jh_{x_2,j}^2|g_j+\la_j^{0\top} \psi_j|}\Psi_0, & \ {\rm if} \ \Psi_1\neq 0,\\
\Psi_0/n, & \ {\rm otherwise},
\end{cases}\\
\IEEEyessubnumber \label{psi1j}
\Phi_1^i &=& h_{x_2,i}^2(g_i+\la_i^{0\top} \psi_i),
\end{IEEEeqnarray}
with $\Psi_0$, $\Psi_1$ defined in Theorem \ref{theorem1}, and $\rho_i>0, i\in[n],$ are tuning parameters. Note that \eqref{psi0j} is well-defined as $\sum_{j=1}^{n}\rho_jh_{x,j}^2|g_{j}+\la_j^{0\top} \psi_j|\neq 0$ if $\Psi_1\neq 0$ and $\sum_{i=1}^n\Phi_0^i=\Psi_0$, $\sum_{i=1}^n\Phi_1^iu_{0,i}=\Psi_1 u_0$.

Different from the traditional CBF-QP formulation \cite{ames2016control,Xu2015ADHS}, the optimization \eqref{cbfnlp1} is an NLP because of the nonlinear function $s_i(\cdot)$. Solving an NLP is computationally challenging in general; however, optimization \eqref{cbfnlp1} has a \emph{closed-form} solution, which will be discussed in the next subsection.

\begin{remark}
An alternative optimization to obtain the safe control law $u(x)$ can be formulated as:
\begin{align}
\min_{u\in\R^n} \quad & \|u-u_d\|^2\label{cbfnlp0}\\
\textrm{s.t.}\quad &  \Psi_0+\Psi_1 u_0\geq 0,\nonumber\\
\quad & u=h_{x_2}^\top\odot s(u_0),\nonumber
\end{align}
where $\Psi_0,\Psi_1,s(\cdot)$ are defined in Theorem \ref{theorem1} and $u_d$ is the nominal controller. The admissible set of $u_0$ in \eqref{cbfnlp0} is larger than that of \eqref{cbfnlp1}, but the existence of a closed-form solution to \eqref{cbfnlp0} is still unclear to us.
\end{remark}

\begin{remark}
The main idea behind the formulation of \eqref{cbfnlp1} is to split the set $K_{BF}$ into $n$ independent set $K_{BF}^i\triangleq \{{\mathfrak u}\in\R: \Phi_0^i+\Phi_1^i{\mathfrak u} \geq 0\}$,
such that $u_{i}\in K_{BF}^i,\forall i\in[n]\implies u\in K_{BF}$.
It is easy to see that if $K_{BF}\neq \emptyset$, then $K_{BF}^i\neq\emptyset$ for any $i\in[n]$ and any $x\in\C$: if $\Psi_1\neq 0$, then $\Phi_1^i=0\implies\Phi_0^i=0\implies\Phi^i_0(x)+\Phi^i_1(x) u_{0,i}\geq 0$ always holds; if $\Psi_1=0$ and $K_{BF}\neq \emptyset\implies$ $\Phi^i_1(x)=0$ and $\Psi_0\geq 0\implies\Phi_0^i=\frac{\Psi_0}{n}\geq 0\implies\Phi^i_0(x)+\Phi^i_1(x) u_{0,i}\geq 0$ always holds.
\end{remark}

\subsection{Closed-form Solution to the aCBF-NLP}
\label{sec:nlpsolution}
In this subsection, we will discuss the closed-form solution to \eqref{cbfnlp1}. We will focus on the case $n=1$ because the $n>1$ case can be easily solved by considering the $n$ NLPs in \eqref{cbfnlp1} independently.

When  $n=1$, the subscript $i$ for all relevant variables defined in Theorem \ref{theorem1} will be discarded for the sake of simplicity.
It is also easy to see that $\Phi_0=\Psi_1$, $\Phi_1=\Psi_1$, and $h_{x_2}=0\implies u=0$ according to Theorem \ref{theorem1}. Thus, without loss of generality, we assume that $h_{x_2}\neq 0$ in the analysis of this subsection. By substituting $u=h_{x_2}s(u_0)$ into the objective function of \eqref{cbfnlp1}, it is easy to see that \eqref{cbfnlp1} is equivalent to the following optimization when $n=1$:
\begin{align}
\min_{u_0\in\R} \quad & (s(u_0)- \bar u_d)^2\label{cbfnlp}\\
\textrm{s.t.}\quad &  \Psi_0+\Psi_1 u_0\geq 0,\nonumber
\end{align}
where $\bar u_d=u_d/h_{x_2}$ and $\Psi_0,\Psi_1,s(\cdot)$ are defined in Theorem \ref{theorem1}.
Based on the properties of the function $s(\cdot)$ presented in Lemma \ref{lemma:1} (see Appendix), the optimal solution to \eqref{cbfnlp} can be obtained, from which the closed-form solution to \eqref{cbfnlp1} can be obtained, as shown in the following proposition.
\begin{proposition}\label{theorem2}
The closed-form solution to \eqref{cbfnlp1} can be represented as
\begin{equation*}
    u=\begin{cases}
     h_{x_2} \max \left( s\left(-\frac{\Psi_0}{\Psi_1}\right), \bar u_d\right),  & {\rm if} \ \A_1\ {\rm holds},\\
     h_{x_2}\min \left(s\left(-\frac{\Psi_0}{\Psi_1}\right),\bar u_d\right),  & {\rm if} \ \A_2\ {\rm holds},\\
      h_{x_2}\max(s(y^*), \bar u_d), & {\rm if}\  \A_3\ {\rm holds},\\
      u_d, & {\rm if}\ \Psi_1=0\wedge\bar b-\ka_2\geq 0,\\
     0, &{\rm if} \ h_{x_2}=0,
    \end{cases}
\end{equation*}
where $\kappa_2,\Psi_0,\Psi_1$ are given in Theorem \ref{theorem1}, $\bar b=b|h_{x_2}|$, $\bar\ep_2=\ep_2/|h_{x_2}|$, $y^*=\frac{\bar\ep_2[(\ka_2-\bar b)-\sqrt{\ka_2(\ka_2-\bar b)}]}{\ka_2(\ka_2-\bar b)}$, $\A_1: h_{x_2} \neq 0\wedge ((\bar b -\ka_2\geq 0\wedge \Psi_1>0)\vee (\bar b-\ka_2<0\wedge \Psi_0+\Psi_1 y^*< 0))$, $\A_2: h_{x_2}\neq 0\wedge \bar b-\ka_2\geq 0\wedge \Psi_1<0$, and $\A_3: h_{x_2}\neq 0\wedge \bar b-\ka_2< 0\wedge  \Psi_0+\Psi_1 y^*\geq 0$.

\end{proposition}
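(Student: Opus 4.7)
The plan is to reduce \eqref{cbfnlp1} with $n=1$ to the scalar NLP \eqref{cbfnlp} in $u_0$, and then project $\bar u_d$ onto the image of the feasible half-line under the map $s(\cdot)$. First I would dispose of the trivial cases. If $h_{x_2}=0$, then $u=h_{x_2}s(u_0)=0$ is independent of $u_0$, and the constraint $\Psi_0+\Psi_1 u_0\ge0$ is satisfiable because $K_{BF}$ is non-empty by hypothesis of Theorem \ref{theorem1}; hence $u=0$ is optimal. If $\Psi_1=0$, the constraint reduces to $\Psi_0\ge 0$, which holds automatically for the same reason, so the problem becomes the unconstrained minimization of $(s(u_0)-\bar u_d)^2$. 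Under the extra assumption $\bar b-\ka_2\ge 0$, Lemma \ref{lemma:1} (appealed to from the appendix) implies that $s:\R\to\R$ is a strictly increasing bijection, so some $u_0$ achieves $s(u_0)=\bar u_d$ and the objective attains its infimum $0$, giving $u=h_{x_2}\bar u_d=u_d$.

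For the generic case $h_{x_2}\neq 0$ and $\Psi_1\neq 0$, the feasible set $F$ is a closed half-line: $F=[-\Psi_0/\Psi_1,\infty)$ if $\Psi_1>0$, and $F=(-\infty,-\Psi_0/\Psi_1]$ if $\Psi_1<0$. When $\bar b-\ka_2\ge 0$, Lemma \ref{lemma:1} guarantees that $s$ is strictly increasing on $\R$, so $s(F)$ is itself a closed half-line with finite endpoint $s(-\Psi_0/\Psi_1)$. Minimizing $(s(u_0)-\bar u_d)^2$ over $F$ therefore reduces to projecting $\bar u_d$ onto $s(F)$; this produces $s(u_0^*)=\max(s(-\Psi_0/\Psi_1),\bar u_d)$ in case $\mathcal{A}_1$ (with $\Psi_1>0$) and $s(u_0^*)=\min(s(-\Psi_0/\Psi_1),\bar u_d)$ in case $\mathcal{A}_2$. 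Multiplying by $h_{x_2}$ returns the advertised formulas for $u$.

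The delicate case is $\bar b-\ka_2<0$, in which I expect Lemma \ref{lemma:1} to characterize $s$ as being strictly increasing away from a unique interior critical point at $y^*$, which is the global minimizer of $s$; a direct derivative computation on \eqref{adapcontrol}, handled separately for $u_0>0$ and $u_0<0$, should confirm that the quadratic equation from $s'(u_0)=0$ has $y^*$ as its admissible (negative) root. I then split according to whether $y^*\in F$, which is equivalent to the sign of $\Psi_0+\Psi_1 y^*$. If $\Psi_0+\Psi_1 y^*<0$, then $y^*\notin F$, $s$ is monotone on all of $F$, and the same projection argument as before reproduces the formula of $\mathcal{A}_1$ for either sign of $\Psi_1$. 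If $\Psi_0+\Psi_1 y^*\ge 0$, then $y^*\in F$, $\min_{u_0\in F}s(u_0)=s(y^*)$, and $s(F)=[s(y^*),\infty)$; projecting $\bar u_d$ onto this half-line yields $\max(s(y^*),\bar u_d)$ as in $\mathcal{A}_3$.

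The main obstacle I anticipate is the monotonicity/extremum bookkeeping for $s$ in the $\bar b<\ka_2$ regime: verifying that $y^*$ is indeed the unique global minimizer and that on each side of $y^*$ the function $s$ is strictly monotone, so that the image of a feasible half-line is itself an interval of the form $[s(y^*),\infty)$ or a one-sided half-line with endpoint $s(-\Psi_0/\Psi_1)$. This is precisely the content that Lemma \ref{lemma:1} is expected to supply, at which point the proof reduces to the elementary half-line-projection arguments outlined above.
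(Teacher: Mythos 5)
Your proposal is correct and follows essentially the same route as the paper: reduce \eqref{cbfnlp1} with $n=1$ to the scalar problem \eqref{cbfnlp}, invoke Lemma \ref{lemma:1} for the monotone ($\bar b-\ka_2\geq 0$) and unimodal-with-minimizer-$y^*$ ($\bar b-\ka_2<0$) behavior of $s$, and project $\bar u_d$ onto the image of the feasible half-line, with the trivial $h_{x_2}=0$ and $\Psi_1=0$ cases handled separately. Your explicit check that the $\mathcal{A}_1$ max-formula also covers $\Psi_1<0$ when $y^*\notin K_{BF}$ (via the $+\infty$ limit of $s$ at $-\infty$) is a point the paper treats only implicitly, but the argument is the same in substance.
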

\begin{proof}
Note that similar to the aCBF-QPs presented in \cite{taylor2020adaptive,lopez2020robust}, the optimization \eqref{cbfnlp1} is solved pointwise for a given $(x,\hat\mu,\hat\nu)$, such that $\ka_1$, $\ka_2$ defined in \eqref{adapcontrol} and $h_{x_2}$  should be considered as constants when solving \eqref{cbfnlp}.
If $\bar b-\ka_2\geq 0$, $s(y)$ is monotonically increasing, according to Lemma \ref{lemma:1}. When $\Psi_1>0$, one can see that $K_{BF}=\{{\mathfrak u}:{\mathfrak u}\geq -\frac{\Psi_0}{\Psi_1}\}$ and $s(u_0)\in\left[s\left(-\frac{\Psi_0}{\Psi_1}\right),+\infty\right]$ for any $u_0\in K_{BF}$. It is easy to verify that $s(u_0^*)=\bar u_d$ if $\bar u_d\geq s\left(-\frac{\Psi_0}{\Psi_1}\right)$ and $s(u_0^*)= s\left(-\frac{\Psi_0}{\Psi_1}\right)$ when $\bar u_d<s\left(-\frac{\Psi_0}{\Psi_1}\right)$, where $u_0^*$ denotes the solution to \eqref{cbfnlp}.
Hence, in conclusion, one has $s(u_0^*)=\max \left\{s\left(-\frac{\Psi_0}{\Psi_1}\right),\bar u_d\right\}$, such that the closed-form solution to \eqref{cbfnlp1} is $u=h_{x_2}\max \left\{s\left(-\frac{\Psi_0}{\Psi_1}\right),\bar u_d\right\}$. Performing the similar analysis one can see that the closed-form solution to \eqref{cbfnlp1} is $u=h_{x_2}\min \left\{s\left(-\frac{\Psi_0}{\Psi_1}\right),\bar u_d\right\}$ when $\Psi_1<0$. If $\Psi_1=0$, $K_{BF}=\R$ and $s(u_0)\in\R$ for any $u_0\in K_{BF}$, such that $u=u_d$.

On the other hand, if $\bar b-\ka_2<0$, one knows that $s(y)$ has a global minimal $y^*$, according to Lemma \ref{lemma:1}. Note that $\Psi_0+\Psi_1y^*<0$ indicates $y^*\notin K_{BF}$, such that $s(u_0) \in \left[s\left(-\frac{\Psi_0}{\Psi_1}\right),+\infty\right]$ for any $u_0\in K_{BF}$ (note that the non-emptiness of $K_{BF}$ indicates $\Psi_0\geq 0$ if $\Psi_1=0$). Then, one can see that $s(u_0^*)=s\left(-\frac{\Psi_0}{\Psi_1}\right)$ if $\bar u_d\leq s\left(-\frac{\Psi_0}{\Psi_1}\right)$ and $s(u_0^*)=\bar u_d$ when $\bar u_d>s\left(-\frac{\Psi_0}{\Psi_1}\right)$, such that $s(u_0^*)=\max \left\{s\left(-\frac{\Psi_0}{\Psi_1}\right),\bar u_d\right\}$ and the closed-form solution to \eqref{cbfnlp} is   $u=h_{x_2}\max \left\{s\left(-\frac{\Psi_0}{\Psi_1}\right),\bar u_d\right\}$. Furthermore, $\Psi_0+\Psi_1 y^*\geq 0$ implies $y^*\in K_{BF}$, such that $s(u_0)\in [s(y^*), +\infty]$ for any $u_0\in K_{BF}$. Using the similar procedure shown above, one can conclude that the closed-form solution to \eqref{cbfnlp} is $u=h_{x_2}\max \left\{s(x^*),\bar u_d\right\}$.
\end{proof}

\subsection{Extension to More General Systems}\label{sec:maingeneral}
In this subsection, we will generalize the aCBF-based control design method proposed in Sec. \ref{sec:mainadaptive} to more general systems. Specifically, we will design a safe controller $u$ for the system \eqref{eqnsys2} with the same $f_\ta$, $\ta_i$, $\varphi_i$, $i\in[n]$, as those defined in \eqref{eqnstructure} and non-diagonal $g$ and $g_\la$
whose $(i,j)$-th entries can be expressed as 
\begin{equation}\label{ggla}
  (g)_{ij}=g_{ij}(x),\   (g_\la)_{ij}=\la^\top_{ij}\psi_{ij}(x),
\end{equation}
where $g_{ij}:\R^{m+n}\to\R$, $\psi_{ij}:\R^{m+n}\to\R^{q_{ij}}$ are known Lipschitz functions and $\la_{ij}\in\R^{q_{ij}}$ are vectors of unknown parameters, $i\in[n],j\in[n]$.

Similar to Assumption \ref{assump:1}, we assume that $\theta_i$ and $\la_{ij}$ are upper and lower bounded by known vectors. 
\begin{assumption}\label{assump:3}
For every $i,j\in[n]$,
there exist known vectors $\overline{\theta}_i, \underline{\theta}_i\in\mathbb{R}^{p_i}$ and $\overline{\la}_{ij},\underline{\la}_{ij}\in\mathbb{R}^{q_{ij}}$, such that  $\underline{\theta}_i\leq\theta_i \leq\overline{\theta}_{i}$ and  $\underline{\lambda}_{ij}\leq \la_{ij}\leq\overline{\lambda}_{ij}$.
\end{assumption}

Similar to \cite[Assumption 1]{xu2003robust}, we assume that $\tilde g\triangleq g+g_\la$ is away from the singularity point by letting the smallest singular value of $\frac{\tilde g(x)+\tilde g^\top (x)}{2}$ lower bounded by some known positive constant.

\begin{assumption}\label{assump:4}
Given functions $g(x),g_\la(x)$ in  the forms as shown in \eqref{ggla}, the matrix $\frac{\tilde g(x)+\tilde g^\top (x)}{2}$ is either uniformly positive definite or uniformly negative definite for all $x\in\mathscr{X}$,  where $\mathscr{X}\supset \C,\tilde g=g+g_\la$ and $\mathscr{X}\in\R^{m+n}$ is a compact set, i.e., there exists a positive constant $b^*>0$ such that $\underline{\sigma}\left(\frac{\tilde g(x)+\tilde g^\top (x)}{2}\right)\geq b^*,\forall x\in\mathscr{X}$,  
where $\underline{\sigma}(\cdot)$ represents the smallest singular value of a matrix. 
\end{assumption}

Without loss of generality, we assume that $\frac{\tilde g(x)+\tilde g^\top (x)}{2}$ is positive definite for any $x\in\C$ in this subsection. We select arbitrary values $\theta^0_i\in\R^{p_i}$, $\lambda^0_{ij}\in\R^{q_{ij}}$ satisfying $\underline{\theta}_i\leq\theta^0_i\leq \overline{\theta}_i$, $\underline{\lambda}_{ij}\leq\lambda^0_{ij}\leq \overline{\lambda}_{ij}$, $i,j\in[n]$, as the nominal values of $\theta_i$ and $\la_{ij}$, respectively.  We define
\begin{IEEEeqnarray}{rCl}
\IEEEyesnumber \label{matrixdefine}
\IEEEyessubnumber
\hspace{-5mm}\Theta &=& \begin{bmatrix}
\theta_1^\top &\theta_2^\top &\cdots&\theta_{n}^\top
\end{bmatrix}^\top,\; \Theta^0 = \begin{bmatrix}
\theta_1^{0\top} &\theta_2^{0\top} &\cdots&\theta_{n}^{0\top}
\end{bmatrix}^\top,\\
\IEEEyessubnumber
\hspace{-5mm}\Lambda &=& \begin{bmatrix}
\lambda_{11}^\top &\lambda_{12}^\top &\cdots&\lambda_{nn}^\top
\end{bmatrix}^\top,\; \Lambda^0 = \begin{bmatrix}
\lambda_{11}^{0\top} &\lambda_{12}^{0\top} &\cdots&\lambda_{nn}^{0\top}
\end{bmatrix}^\top, \\
\IEEEyessubnumber
\hspace{-5mm}\Omega_\varphi &=& \begin{bmatrix}
\varphi_1^\top &\varphi_2^\top &\cdots&\varphi_{n}^\top
\end{bmatrix}^\top, \;
\Omega_\psi = \begin{bmatrix}
\psi_{11}^\top &\psi_{12}^\top &\cdots&\psi_{nn}^\top
\end{bmatrix}^\top,\\
\hspace{-5mm}f_\theta^0 &=&\begin{bmatrix}
\theta_1^0\varphi_1\\\theta_2^0\varphi_2\\\vdots\\\theta_{n}^0\varphi_{n}
\end{bmatrix}, \;
\IEEEyessubnumber \label{matrixdefine:gla}
g_\la^0=\begin{bmatrix}
\la_{11}^{0\top} \psi_{11}&\la_{12}^{0\top} \psi_{12}&\cdots&\la_{1n}^{0\top} \psi_{1n}\\
\la_{21}^{0\top} \psi_{21}&\la_{22}^{0\top} \psi_{22}&\cdots&\la_{2n}^{0\top}  \psi_{2n}\\
\cdots &\cdots &\cdots&\cdots\\
\la_{n1}^{0\top} \psi_{n1}&\la_{n2}^{0\top}  \psi_{n2}&\cdots&\la_{nn}^{0\top} \psi_{nn}
    \end{bmatrix}.
\end{IEEEeqnarray}
and
\begin{equation}\label{munugeneral}
    \mu=\|\Theta-\Theta^0\|,\; \nu=\|\Lambda-\Lambda^0\|.
\end{equation}
According to Assumption \ref{assump:3}, one can see that
\begin{align*}
  \mu\leq \bar\mu\triangleq\sqrt{\sum_{j=1}^{M}\max\{(\overline\Theta_j-\Theta^0_j)^2,(\underline\Theta_j-\Theta^0_j)^2\}},\\
  \IEEEyessubnumber
  \nu\leq\bar \nu\triangleq\sqrt{\sum_{j=1}^{N}\max\{(\overline\Lambda_j-\Lambda^0_j)^2,(\underline\Lambda_j-\Lambda^0_j)^2\}},
\end{align*}
where
$M=\sum_{i=1}^{n}p_i$ and $N=\sum_{j=1}^{n}\sum_{i=1}^{n}q_{ij}$.
Analogous to Theorem \ref{theorem1}, the following theorem provides an aCBF-based controller that ensures the safety of system \eqref{eqnsys2} with $g$ and $g_\la$ defined in \eqref{ggla}.

\begin{theorem}\label{theorem3}
Consider the system \eqref{eqnsys2} with $f_\ta$ defined in \eqref{eqnstructure} and $g,g_\la$ defined in \eqref{ggla}, as well as the safe set $\C$ defined in \eqref{setc}. Suppose that\\
(i) Assumptions \ref{assump:0}, \ref{assump:3} and \ref{assump:4} hold;\\
(ii) There exist positive constants $\gamma,\epsilon_1,\epsilon_2,\gamma_\theta,
\gamma_{\lambda}>0$, such that the following set is non-empty:
\begin{equation}\label{cbfconditiongeneral1}
    K_{BF}^{g}\triangleq\{{\mathfrak u}\in\R\mid \Psi_0+\Psi_1 {\mathfrak u}\geq 0\}, \ \forall x\in\C,
\end{equation}
where
$\Psi_{0}=\mathscr{M}+h_{x_2} f_\theta^0 -(\ep_1+\ep_2)+\gamma\left(h-\frac{\bar\mu^2}{2\ga_\theta}-\frac{\bar\nu^2}{2\ga_\la}\right)$, $\Psi_{1}=h_{x_2}(g+g_\la^0)h_{x_2}^\top$, $h_{x_2}=\frac{\pa h}{\pa x_2}$, and $\mathscr{M}$ is the same as that defined in Theorem \ref{theorem1};\\
(iii) $\hat\mu$ and $\hat\nu$ are parameter estimations governed by the following adaptive laws:
\begin{IEEEeqnarray}{rCl}
\IEEEyesnumber \label{adaptive:bothgeneral}
\IEEEyessubnumber \label{adaptivelgeneralaw1}
\dot {\hat{\mu}} &=&-{\gamma} \hat\mu+\gamma_{\theta}\|h_{x_2}\|\|\Omega_\varphi\|, \\
\IEEEyessubnumber \label{adaptivelgeneralaw2}
\dot {\hat{\nu}} &=&-{\gamma} \hat\nu+\gamma_{\lambda}\|h_{x_2}\|^2|u_{0}|\|\Omega_\psi\|,
\end{IEEEeqnarray}
where $\hat\mu(0), \hat\nu(0)>0$ and $u_0\in\R$ is a Lipschitz function satisfying $u_0\in K_{BF}^g(x)$;\\
(iv) The following inequality holds: $h(x(0))\geq \frac{\hat\mu(0)^2+\bar{\mu}^2}{2\gamma_{\theta}}+\frac{\hat\nu(0)^2+\bar{\nu}^2}{2\gamma_{\lambda}}$.\\
Then, the control input $u=s_g(u_0)h_{x_2}^\top\in\R^n$ will make $h(x(t))\geq 0$ for any $t>0$, where
\begin{equation}\label{ugeneral}
    s_g(u_0)\triangleq u_0+\frac{\ka_{1,g}}{b^*}+\frac{\ka_{2,g}^2u_0^2}{b^*(\ka_{2,g}\|h_{ x_2}\||u_0|+\ep_2)},
\end{equation}
with $\ka_{1,g}=\frac{\hat\mu^2\|\Omega_\varphi\|^2}{\hat\mu \|\Omega_\varphi\|\left\|h_{x_2}\right\|+\epsilon_1}$ and $\ka_{2,g}=\hat\nu\|\Omega_\psi\|\|h_{x_2}\|$.
\end{theorem}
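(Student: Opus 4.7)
The plan is to mirror the proof of Theorem~\ref{theorem1}, replacing the entry-wise assumption (Assumption~\ref{assump:2}) with the symmetric-part assumption (Assumption~\ref{assump:4}) and collapsing the per-channel estimates into the scalar errors $\hat\mu,\hat\nu$ defined via \eqref{munugeneral}. I would first introduce the augmented barrier $\bar h(x,t) = h(x) - \frac{\tilde\mu^2}{2\gamma_\theta} - \frac{\tilde\nu^2}{2\gamma_\lambda}$, where $\tilde\mu = \mu - \hat\mu$ and $\tilde\nu = \nu - \hat\nu$. Exactly as before, the adaptive laws \eqref{adaptive:bothgeneral} together with $\hat\mu(0),\hat\nu(0)>0$ and the Comparison Lemma imply $\hat\mu(t),\hat\nu(t)\geq 0$, and condition (iv) combined with $(\mu-\hat\mu(0))^2 \leq \mu^2+\hat\mu(0)^2 \leq \bar\mu^2+\hat\mu(0)^2$ (and likewise for $\nu$) yields $\bar h(x(0),0)\geq 0$.

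The crux is to show $\dot{\bar h}\geq -\gamma\bar h$. Setting $v\triangleq h_{x_2}^\top$ and $\tilde g\triangleq g+g_\lambda$, differentiation together with Assumption~\ref{assump:0} gives
$\dot{\bar h} \geq \mathcal{M} + h_{x_2} f_\theta^0 + h_{x_2}(f_\theta-f_\theta^0) + s_g(u_0)\, v^\top \tilde g\, v + \tfrac{\tilde\mu\dot{\hat\mu}}{\gamma_\theta}+\tfrac{\tilde\nu\dot{\hat\nu}}{\gamma_\lambda}$.
I would expand $s_g(u_0)\,v^\top \tilde g\,v$ term by term: the leading piece decomposes as $u_0\Psi_1 + u_0\, v^\top(g_\lambda-g_\lambda^0)\,v$, while for the two compensating summands the symmetric-part identity $v^\top \tilde g\, v = v^\top\tfrac{\tilde g+\tilde g^\top}{2}v \geq b^*\|h_{x_2}\|^2$ from Assumption~\ref{assump:4} cancels the $b^*$ divisions in \eqref{ugeneral}, producing $\kappa_{1,g}\|h_{x_2}\|^2$ and $\frac{\kappa_{2,g}^2 u_0^2\|h_{x_2}\|^2}{\kappa_{2,g}\|h_{x_2}\||u_0|+\epsilon_2}$, respectively. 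The unknown-parameter residuals are bounded by two Cauchy--Schwarz estimates tuned to the scalar adaptive parameters: $|h_{x_2}(f_\theta-f_\theta^0)| = |\sum_i h_{x_2,i}\tilde\theta_i^\top\varphi_i| \leq \mu\|h_{x_2}\|\|\Omega_\varphi\|$, and $|v^\top(g_\lambda-g_\lambda^0)v| \leq \|g_\lambda-g_\lambda^0\|_F\|v\|^2 \leq \nu\|\Omega_\psi\|\|h_{x_2}\|^2$. Invoking $u_0\in K_{BF}^g$ to remove $\Psi_0+\Psi_1 u_0\geq 0$, using the elementary identity $-a+\tfrac{a^2}{a+\epsilon}=-\tfrac{a\epsilon}{a+\epsilon}\geq -\epsilon$ to obtain $-\hat\mu\|h_{x_2}\|\|\Omega_\varphi\|+\kappa_{1,g}\|h_{x_2}\|^2\geq -\epsilon_1$ and $-\hat\nu|u_0|\|h_{x_2}\|^2\|\Omega_\psi\|+\frac{\kappa_{2,g}^2 u_0^2\|h_{x_2}\|^2}{\kappa_{2,g}\|h_{x_2}\||u_0|+\epsilon_2}\geq -\epsilon_2$, and writing $\mu=\tilde\mu+\hat\mu$, $\nu=\tilde\nu+\hat\nu$, the estimate reduces to
$\dot{\bar h}\geq -\gamma\bigl(h-\tfrac{\bar\mu^2}{2\gamma_\theta}-\tfrac{\bar\nu^2}{2\gamma_\lambda}\bigr) + \tilde\mu\bigl(\tfrac{\dot{\hat\mu}}{\gamma_\theta}-\|h_{x_2}\|\|\Omega_\varphi\|\bigr)+\tilde\nu\bigl(\tfrac{\dot{\hat\nu}}{\gamma_\lambda}-|u_0|\|h_{x_2}\|^2\|\Omega_\psi\|\bigr)$.
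Plugging in the adaptive laws \eqref{adaptive:bothgeneral} and using $\hat\mu\tilde\mu\leq(\bar\mu^2-\tilde\mu^2)/2$, $\hat\nu\tilde\nu\leq(\bar\nu^2-\tilde\nu^2)/2$ collapses this to $\dot{\bar h}\geq -\gamma\bar h$. The Comparison Lemma then gives $\bar h(t)\geq 0$, and since $\bar h\leq h$, safety follows.

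The main technical obstacle is the Cauchy--Schwarz step for the non-diagonal term $v^\top(g_\lambda-g_\lambda^0)v$: writing $(g_\lambda-g_\lambda^0)_{ij}=\tilde\lambda_{ij}^\top\psi_{ij}$, one must chain $|v^\top M v|\leq\|M\|_F\|v\|^2$ with $\|g_\lambda-g_\lambda^0\|_F^2 = \sum_{i,j}(\tilde\lambda_{ij}^\top\psi_{ij})^2 \leq (\max_{i,j}\|\psi_{ij}\|^2)\sum_{i,j}\|\tilde\lambda_{ij}\|^2 \leq \|\Omega_\psi\|^2\nu^2$ so that the resulting bound matches precisely the regressor $\|h_{x_2}\|^2|u_0|\|\Omega_\psi\|$ driving \eqref{adaptivelgeneralaw2}. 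Equally essential is routing $v^\top\tilde g\,v$ through its symmetric part; because the antisymmetric component annihilates in the quadratic form, the lower bound $v^\top\tilde g\,v\geq b^*\|v\|^2$ only requires positive-definiteness of $(\tilde g+\tilde g^\top)/2$, as postulated in Assumption~\ref{assump:4}, not of $\tilde g$ itself. Once these two ingredients are secured, the remainder is a direct transcription of the argument used for Theorem~\ref{theorem1}.
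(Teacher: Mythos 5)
Your proposal is correct and follows essentially the same route as the paper's own proof: the same augmented barrier $\bar h = h - \tilde\mu^2/(2\gamma_\theta) - \tilde\nu^2/(2\gamma_\lambda)$, the quadratic-form bound $h_{x_2}(g+g_\lambda)h_{x_2}^\top \geq b^*\|h_{x_2}\|^2$ from Assumption \ref{assump:4}, the two Cauchy--Schwarz/Frobenius estimates (which are exactly the paper's Lemma \ref{lemma1}), the $-a + a^2/(a+\epsilon) \geq -\epsilon$ identities, and the final comparison-lemma argument with $\hat\mu\tilde\mu \leq (\bar\mu^2-\tilde\mu^2)/2$. The only cosmetic deviation is your intermediate $\max_{i,j}\|\psi_{ij}\|^2$ bound in the Frobenius estimate, which still lands on the same $\nu\|\Omega_\psi\|\|h_{x_2}\|^2|u_0|$ term and changes nothing structurally.
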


\begin{proof}
Assumption \ref{assump:4} indicates that, for any $v\in\R^{n}$, $v^\top (g+g_\la)v\geq b^*\|v\|^2$ \cite{xu2003robust}. Similar to the proof of Theorem \ref{theorem1}, one can see that $\hat\mu(t)\geq 0,\hat\nu(t)\geq 0,\forall t>0$. Define a candidate CBF $\bar h$ as $\bar h = h-\frac{1}{2\ga_\theta}\tilde{\mu}^2-\frac{1}{2\ga_\la}\tilde{\nu}^2$, 
where $\tilde\mu = \mu - \hat\mu$ and $\tilde\nu = \nu - \hat\nu$. 

We claim that $\dot{\bar{h}}\geq-\ga\bar h$ where $\dot{\bar{h}}$ is the time derivative of $\bar h$. Indeed, it is easy to see that $\dot{\bar{h}}\geq \mathscr{M} +h_{x_2}(f_\theta + (g+g_\la) u )+\frac{1}{\ga_\theta}\tilde\mu \dot{\hat{\mu}} +\frac{1}{\ga_\la}\tilde\nu \dot{\hat{\nu}}$. 
Substituting \eqref{ugeneral} into the inequality above yields $\dot{\bar{h}}\geq  \mathscr{M}+ h_{x_2} f_\theta +h_{x_2} (g+g_\la)h_{x_2}^\top u_0 + \frac{1}{\ga_\theta}\tilde\mu \dot{\hat{\mu}} +\frac{1}{\ga_\la}\tilde\nu \dot{\hat{\nu}}+
    h_{x_2}  (g+g_\la)h_{x_2}^\top \bigg(\frac{\hat\mu^2\|\Omega_\varphi\|^2}{b^*(\hat\mu \|\Omega_\varphi\|\|h_{x_2}\|+\epsilon_1)}+\frac{\hat\nu^2\|\Omega_\psi\|^2\|\|h_{x_2}\|^2u_{0}^2}{b^*(\hat\nu|u_{0}|\|\Omega_\psi\|\|\|h_{x_2}\|^2+\epsilon_2)}\bigg)
    \geq \mathscr{M}+h_{x_2}f^0_\theta +h_{x_2} (g+g^0_\la)h_{x_2}^\top u_0 + \frac{1}{\ga_\theta}\tilde\mu \dot{\hat{\mu}} +\frac{1}{\ga_\la}\tilde\nu \dot{\hat{\nu}}+h_{x_2} (f_\theta-f_\theta^0)
    +h_{x_2}(g_\la-g_\la^0)h_{x_2}^\top u_0\nonumber+\frac{\hat\mu^2\|h_{x_2}\|^2\|\Omega_\varphi\|^2}{\hat\mu \|\Omega_\varphi\|\|h_{x_2}\|+\epsilon_1}+\frac{\hat\nu^2\|\Omega_\psi\|^2\|\|h_{x_2}\|^4 u_{0}^2}{\hat\nu|u_{0}|\|\Omega_\psi\|\|\|h_{x_2}\|^2+\epsilon_2}
    \geq \Psi_0+\Psi_1u_0 +(\ep_1+\ep_2)-\ga\left(h-\frac{1}{2\ga_\theta}\bar{\mu}^2-\frac{1}{2\ga_\la}\bar{\nu}^2\right)+ \frac{1}{\ga_\theta}\tilde\mu \dot{\hat{\mu}} +\frac{1}{\ga_\la}\tilde\nu \dot{\hat{\nu}}-\mu\|h_{x_2}\|\|\Omega_\varphi\|
     -\nu\|h_{x_2}\|^2\|\Omega_\psi\||u_0|+\frac{\hat\mu^2\|h_{x_2}\|^2\|\Omega_\varphi\|^2}{\hat\mu \|\Omega_\varphi\|\|h_{x_2}\|+\epsilon_1}+\frac{\hat\nu^2\|\Omega_\psi\|^2\|h_{x_2}\|^4u_{0}^2}{\hat\nu|u_{0}|\|\Omega_\psi\|\|h_{x_2}\|^2+\epsilon_2}$, 
where the second inequality is from Assumption \ref{assump:4} and the third inequality comes from Lemma \ref{lemma1} shown in Appendix. Selecting $u_0\in K_{BF}^g$ we have $   \dot{\bar{h}}\geq \ep_1+\ep_2-\ga\left(h-\frac{1}{2\ga_\theta}\bar{\mu}^2-\frac{1}{2\ga_\la}\bar{\nu}^2\right)+\frac{1}{\ga_\theta}\tilde\mu\left(\dot{\hat{\mu}}-\ga_\theta\|h_{x_2}\|\|\Omega_\varphi\|\right)+\frac{1}{\ga_\la}\tilde\nu\left(\dot{\hat{\nu}}-\ga_\la\|h_{x_2}\|^2\|\Omega_\psi\||u_0|\right)
    -\hat\mu \|h_{x_2}\|\|\Omega_\varphi\|+\frac{\hat\mu^2\|h_{x_2}\|^2\|\Omega_\varphi\|^2}{\hat\mu \|\Omega_\varphi\|\|h_{x_2}\|+\epsilon_1}-\hat \nu\|h_{x_2}\|^2\|\Omega_\psi\||u_0| + \frac{\hat\nu^2\|\Omega_\psi\|^2\|h_{x_2}\|^4u_{0}^2}{\hat\nu|u_{0}|\|\Omega_\psi\|\|h_{x_2}\|^2+\epsilon_2}
    \geq 
    -\ga\left(h-\frac{1}{2\ga_\theta}\bar{\mu}^2-\frac{1}{2\ga_\la}\bar{\nu}^2\right)+\frac{1}{\ga_\theta}\tilde\mu\left(\dot{\hat{\mu}}-\ga_\theta\|h_{x_2}\|\|\Omega_\varphi\|\right)+\frac{1}{\ga_\la}\tilde\nu\left(\dot{\hat{\nu}}-\ga_\la\|h_{x_2}\|^2\|\Omega_\psi\||u_0|\right).$
Substituting  \eqref{adaptive:bothgeneral} into the inequality above, we have $\dot{\bar{h}}\geq -\ga\left(h-\frac{1}{2\ga_\theta}\bar{\mu}^2-\frac{1}{2\ga_\la}\bar{\nu}^2\right) -\frac{\ga}{\ga_\theta}\tilde\mu \hat\mu-\frac{\ga}{\ga_\la}\tilde\nu \hat\nu.$
Similar to the proof of Theorem \ref{theorem1}, one can see $\hat{\mu}\tilde{\mu}\leq \frac{\bar{\mu}^2}{2}-\frac{\tilde{\mu}^2}{2}$ and $\hat{\nu}\tilde{\nu}\leq \frac{\bar{\nu}^2}{2}-\frac{\tilde{\nu}^2}{2}$, which implies that $\dot{\bar{h}}\geq-\ga\bar h$. Our claim is thus proven. 
Note that $\bar h(x(0),0)\geq 0$ because of condition (iv). Hence, one can conclude that $\bar h(t)\geq 0,\forall t>0$, and thus, $h(x(t))\geq 0,\forall t>0$.
\end{proof}

\begin{remark}\label{remark:generaldisadvantage}
Compared with Theorem \ref{theorem1}, Theorem \ref{theorem3} provides a safety guarantee for a more general class of systems but the resulting safe controller tends to have more conservative performance. This is because the control $u\in \R^n$ is designed to have a particular structure $u=s_g(u_0)h_{x_2}^\top$, which requires $u$ always proportional to $h_{x_2}^\top$, to deal with the non-diagonal structures of $g$ and $g_\la$. How to improve the design to generate a less conservative safe controller will be our future work.
\end{remark}

The safe controller $u(x)$ in Theorem \ref{theorem3} can be obtained pointwise for any $x\in\C$ via solving the following optimization problem:
\begin{align}
\min_{u\in\R^n} \quad & \|u- u_d\|^2\label{cbfnlpgeneral}\\
\textrm{s.t.}\quad &  \Psi_0+\Psi_1 u_0\geq 0,\nonumber\\
\quad & u=s_g(u_0)h_{x_2}^\top,\nonumber
\end{align}
where $\Psi_0$ and $\Psi_1$ are defined in Theorem \ref{theorem3}. The closed-form solution of \eqref{cbfnlpgeneral} can be obtained by using Proposition \ref{theorem2}.

\section{Tightening Parameter Bounds via a Data-driven Approach}\label{sec:data}

The controller design proposed in Section \ref{sec:main} relies on the bounds of unknown parameters as shown in Assumptions \ref{assump:1} and \ref{assump:3}. If the prior knowledge of the parameter bounds is poor, the control performance tends to be conservative (see simulation examples in Section \ref{sec:simulation}).  In this section, we present a data-driven approach to get tighter bounds and more accurate nominal values for the unknown parameters. Combining the aCBF-based control design and the data-driven parameter tightening approach provides a mechanism to achieve safety with less conservatism.

Our data-driven method is inspired by the differential inclusion technique proposed in \cite{verginis2021safety}.  To better illustrate the main idea, we consider the system \eqref{eqnsys2} with $m=0$ and $n=1$ shown as follows: 
\begin{equation}\label{eqnsyssimple}
    \dot x = f(x)+f_u(x)+\theta^\top\varphi(x)+(g(x)+\la^\top \psi(x))u,
\end{equation}
where $x\in\R$ is the state, $u\in\R$ is the control input, $f:\R\to\R$ and $g:\R\to\R$ are known Lipschitz functions, $f_u:\R\to\R$ is an unknown (globally) Lipschitz function satisfying Assumption \ref{assump:0}, $\varphi:\R\to\R^p$, $\psi:\R\to\R^q$ are known functions, and $\theta\in\R^p,\la\in\R^q$ are unknown parameters.  The proposed method can be readily extended to systems with multiple inputs by considering each control channel separately. 

Recall that $x_i$ denotes the $i$-th entry of $x$ where $x$ is either a column or a row vector. Given a dataset $\mathscr{E}=\{x^i,\dot{x}^i,u^i\}_{i=1}^N$, the bounds of $\theta$, $\la$, and $f_u$ can be tightened as shown by the following theorem.

\begin{theorem}\label{theoremdatadriven}
Consider the system given in \eqref{eqnsyssimple}. Suppose that (i) Assumptions \ref{assump:0} and \ref{assump:1} hold; (ii) $f_{u}$ has a known Lipschitz constants $L$; (iii) a dataset $\mathscr{E}=\{x^i,\dot{x}^i,u^i\}_{i=1}^N$ is given. Define intervals  $\mcp^0=[\underline{\ta}, \overline{\ta}]^\top \in\mathbb{IR}^{1\times p}$ and $\Q^0=[\underline{\la}, \overline{\la}]^\top\in\mathbb{IR}^{1\times q}$. Let $x^0$ be an arbitrary state in $\C$ and define $\F^0=[ \underline{f}_{u}(x^0), \overline{f}_u(x^0)]$. For $i\in[N]$, $r\in[p]$, $s\in[q]$, define  
\begin{IEEEeqnarray}{rCl}
\IEEEyesnumber
\IEEEyessubnumber
\hspace{-5mm}\F^{i} &=& \left(
    \bigcap_{j=0}^{i-1}\{\F^{j}+L\|x^{i}-x^j\|[-1,1]\}
    \right)\bigcap\nonumber\\
    \IEEEyessubnumber
\hspace{-5mm}&& [\underline{f}_{u}(x^{i}), \overline{f}_u(x^{i})]\bigcap (y^{i}-\mcp^0\varphi^{i}-\Q^0\psi^iu^i),\label{boundfui}\\
\IEEEyessubnumber
\hspace{-5mm}v_0^i &=& (y^i-\F^{i}-\Q^0\psi^iu^i)\cap (\mcp^{i-1} \varphi^i),\\
\IEEEyessubnumber
\hspace{-5mm}v_{r}^i &=& (v_{r-1}^i-\mcp^{i-1}_{r}\varphi^i_{r})\cap \left(\sum_{l=r+1}^p\mcp^{i-1}_{l}\varphi^i_l\right),\\
\IEEEyessubnumber
\hspace{-5mm}\mcp^{i}_{r}\!&=&\!\begin{cases}
\!\!((v_{r-1}^i\!-\!\sum_{l=r+1}^p\!\!\mcp^{i-1}_{l}\!\varphi^i_l)\!\cap\!(\mcp^{i-1}_{r}\!\varphi^i_{r})))\!\frac{1}{\varphi^i_{r}}, \quad
&\!\!\!\!{\rm if} \ \varphi^i_r\neq 0,\\
\!\!\mcp^{i-1}_{r},  &\!\!\!\!{\rm otherwise},
\end{cases}
\end{IEEEeqnarray}
and
\begin{IEEEeqnarray}{rCl}
\IEEEyesnumber
\IEEEyessubnumber
\hspace{-3mm}w_0^i &=& (y^i-{\F}^i-\mcp^{N}\varphi^i)\cap (\Q^{i-1} \psi^iu^i),\\
\IEEEyessubnumber
\hspace{-3mm}w_{s}^i &=& (w_{s-1}^i-\Q^i_{s}\psi^i_s u^i)\cap\left(\sum_{l=s+1}^q\Q^{i-1}_{l}\psi^i_l u^i\right),\\
\IEEEyessubnumber
\hspace{-3mm}\Q^i_{s}&=&\begin{cases}
((w_{s-1}^i-\sum_{l=s+1}^q\Q^{i-1}_{l}\psi^i_l u^i)\cap\\
\indent\indent(\Q^{i-1}_{s}\psi^i_{s}u^i))\frac{1}{\psi^i_{s}u^i},
&{\rm if} \ \psi^i_{s}u^i\neq 0,\\
\Q^{i-1}_{s},  &{\rm otherwise},
\end{cases}
\end{IEEEeqnarray}
where $\varphi^i=\varphi(x^i)$, $\psi^i=\psi(x^i)$, and $y^i=\dot{x}^i-f(x^i)-g(x^i)u^i$. Then, $\ta^\top\in\mcp^N$, $\la^\top\in\Q^N$, and $f_u(x)\in\F(x)\triangleq\bigcap_{j=0}^N\{\F^j+L\|x-x^j\|[-1,1]\}$, for any $x\in\C$.
\end{theorem}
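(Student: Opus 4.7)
The plan is to argue by induction on the sample index $i$, with the key identity coming directly from evaluating the dynamics \eqref{eqnsyssimple} at each data point:
$$f_u(x^i) \;=\; y^i - \theta^\top\varphi^i - \lambda^\top\psi^i u^i, \qquad i\in[N].$$
Together with the Lipschitz bound $|f_u(x)-f_u(x^j)|\le L\|x-x^j\|$ and Assumption \ref{assump:0}, this is the only information one uses. The base case is immediate: $\theta^\top\in\mcp^0$ and $\lambda^\top\in\Q^0$ hold by Assumption \ref{assump:1}, and $f_u(x^0)\in\F^0$ holds by Assumption \ref{assump:0}. For the inductive step, assume $\theta^\top\in\mcp^{i-1}$ and $\lambda^\top\in\Q^{i-1}$, and also $f_u(x^j)\in\F^j$ for all $j<i$.

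First I would check that $f_u(x^i)\in\F^i$ as defined in \eqref{boundfui}. The Lipschitz property gives $f_u(x^i)\in\F^j+L\|x^i-x^j\|[-1,1]$ for every $j<i$; Assumption \ref{assump:0} gives $f_u(x^i)\in[\underline{f}_u(x^i),\overline{f}_u(x^i)]$; and the identity above combined with $\theta^\top\in\mcp^0$, $\lambda^\top\in\Q^0$ gives $f_u(x^i)\in y^i-\mcp^0\varphi^i-\Q^0\psi^i u^i$. Membership is preserved under taking intersections, so $f_u(x^i)\in\F^i$.

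Next I would establish $\theta^\top\in\mcp^i$ by an inner induction on $r\in[p]$, maintaining the invariant that $v_r^i$ contains the residual $\sum_{l=r+1}^p\theta_l\varphi^i_l$. The base $r=0$ holds because $\theta^\top\varphi^i=y^i-f_u(x^i)-\lambda^\top\psi^i u^i$ lies in $y^i-\F^i-\Q^0\psi^iu^i$ (using $f_u(x^i)\in\F^i$ just shown and $\la^\top\in\Q^0$) and simultaneously in $\mcp^{i-1}\varphi^i$ by the outer induction hypothesis. For the step, by the invariant and subtraction of the other coordinates, $\theta_r\varphi^i_r$ belongs to $v_{r-1}^i-\sum_{l=r+1}^p\mcp^{i-1}_l\varphi^i_l$, and also to $\mcp^{i-1}_r\varphi^i_r$ by the hypothesis on the $r$-th coordinate; intersecting and dividing through by the scalar $\varphi^i_r$ (when nonzero) yields $\theta_r\in\mcp^i_r$, while the complementary subtraction preserves the invariant on $v_r^i$. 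When $\varphi^i_r=0$ no update is needed since that coordinate contributes nothing to the residual. A symmetric loop over $s\in[q]$, using the already-tightened $\mcp^N$ in the definition of $w_0^i$, establishes $\lambda^\top\in\Q^N$. Finally, for any $x\in\C$, $f_u(x)\in\F(x)$ follows immediately by applying the Lipschitz bound at each sample $x^j$ together with the just-proved $f_u(x^j)\in\F^j$.

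The main obstacle will be the interval-arithmetic bookkeeping in the coordinate-peeling argument — in particular, formally justifying at each $r$ that the set-containment $\theta_r\varphi^i_r\in I$ implies $\theta_r\in I/\varphi^i_r$ when $\varphi^i_r\neq 0$, and that the residual identity $\sum_{l=r}^p\theta_l\varphi^i_l-\theta_r\varphi^i_r=\sum_{l=r+1}^p\theta_l\varphi^i_l$ transports correctly through the interval subtraction used to define $v_r^i$. Once this is checked, the whole scheme is a clean alternation between applying the measurement identity to shrink the parameter boxes and applying the shrunken boxes to further constrain $f_u(x^i)$, with intersections preserving every true value.
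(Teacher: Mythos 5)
Your proposal is correct and follows essentially the same route as the paper's proof: the same measurement identity $f_u(x^i)=y^i-\theta^\top\varphi^i-\lambda^\top\psi^i u^i$, an induction over the data index establishing $f_u(x^i)\in\mathcal{F}^i$ via the Lipschitz/Assumption-bound intersections, an inner coordinate-peeling induction with the invariant that $v_r^i$ contains $\sum_{l=r+1}^p\theta_l\varphi_l^i$ to conclude $\theta_r\in\mathcal{P}^i_r$, and a symmetric pass for $\lambda$ using the tightened $\mathcal{P}^N$. The only difference is organizational (you interleave the $\mathcal{F}^i$ and $\mathcal{P}^i$ inductions, while the paper runs them as separate induction arguments), which is immaterial.
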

\begin{proof}
Note that $f_u(x^i)\in[\underline{f}_u(x^i),\overline{f}_u(x^i)]$ from Assumption \ref{assump:0} and $f_u(x^i)\in y^i-\mcp^0\varphi^i-\Q^0\psi^iu^i$ from
$f_u(x^i)=y^i-\ta^\top\varphi^i-\la^\top\psi^iu^i$, $\ta^\top\in\mcp^0$, and $\la^\top\in\Q^0$. One can see that $f_u(x^i)\in f_u(x^j)+L\|x^i-x^j\|[-1,1]$ holds for any $i\in[N]$ and $j=0,1,\cdots,i-1$ because $|f_u(x^i)-f_u(x^j)|\leq L\|x^i-x^j\|$ by the Lipschitzness of $f_u$. Hence, it is obvious that for any $i\in[N]$, $f_u(x^i)\in\left(\bigcap_{j=0}^{i-1}\{f_u(x^j)+L\|x^i-x^j\|[-1,1]\}\right)\bigcap[\underline{f}_{u}(x^{i}), \overline{f}_u(x^{i})]\bigcap\\ (y^{i}-\mcp^0\varphi^{i}-\Q^0\psi^iu^i)$ \cite{verginis2021safety}, which indicates $f_u(x^i)\in\F^i$ provided $f_u(x^k)\in\F^k$ for any $0\leq k<i$. Since $f_u(x^0)\in\F^0$,
using mathematical induction one can conclude that $f_u(x^i)\in\F^i$, $\forall i\in[N]$; thus, for any $x\in\C$, $f_u(x)\in\bigcap_{j=0}^N\{f_u(x^j)+L\|x-x^j\|[-1,1]\}\subset \F(x)$.

Next, we will prove that if $\ta^\top \in\mcp^{i-1}$, then  
$\ta^\top \varphi^i-\sum_{l=1}^r\ta_l\varphi^i_l\in v^i_r$ holds for any $0\leq r\leq p$. When $r=0$, one can see that $\ta^\top\varphi^i\in v^i_0$ since $\ta^\top\varphi^i=y^i-f_u(x^i)-\la^\top \psi^iu^i\in y^i-\F^i -\Q^0\psi^iu^i$ and $\ta^\top \varphi^i\in\mcp^{i-1}\varphi^i$. Then, we assume $\ta^\top \varphi^i-\sum_{l=1}^{r-1}\ta_l\varphi^i_l\in v^i_{r-1}$ holds. It can be seen that $\ta^\top \varphi^i-\sum_{l=1}^{r}\ta_l\varphi^i_l=\ta^\top \varphi^i-\sum_{l=1}^{r-1}\ta_l\varphi^i_l-\ta_r\varphi^i_r\in v^i_{r-1}-\ta_r\varphi^i_r\in v^i_{r-1}-\mcp^{i-1}_r\varphi^i_r$. On the other hand, one can see $\ta^\top \varphi^i-\sum_{l=1}^{r}\ta_l\varphi^i_l=\sum_{l=r+1}^p\ta_l\varphi^i_l\in \sum_{l=r+1}^p\mcp^{i-1}_l\varphi_l^i$. Summarizing the discussion above, one can conclude that $\ta^\top \varphi^i-\sum_{l=1}^r\ta_l\varphi^i_l\in  (v_{r-1}^i-\mcp^{i-1}_{r}\varphi^i_{r})\cap(\sum_{l=r+1}^p\mcp^{i-1}_{l}\varphi^i_l)=v_{r}^i$.

Finally, we will prove  $\ta^\top\in\mcp^i$ for any $0\leq i\leq N$ using mathematical induction. For $i=0$, $\ta^\top\in\mcp^0$ because of Assumption \ref{assump:1}. Then we assume $\ta^\top \in\mcp^{i-1}$. 
Note that $\ta^\top\varphi^i=y^i-f_u(x^i)-\la^\top \psi^iu^i\in y^i-\F^i -\Q^0\psi^iu^i$ and $\ta^\top \varphi^i\in\mcp^{i-1}\varphi^i$, which implies that $\ta^\top\varphi^i\in v_0^i$. It can be seen that for any $r\in[p]$ one has
$\ta_r\varphi^i_r=\ta^\top\varphi^i-\sum_{l=1}^{r-1}\ta_l\varphi_l^i-\sum_{l=r+1}^p\ta_l\varphi_l^i\in v^i_{r-1}-\sum_{l=r+1}^p \mcp^{i-1}_l\varphi_l^i$. Moreover, noticing that $\ta_r\varphi^i_r\in\mcp^{i-1}_r\varphi_r^i$, we have $\ta_r\in\mcp^i_r$ for any $r\in[p]$, which indicates $\ta^\top \in\mcp^i$. Following the similar procedure above, one can prove that $\la^\top \in\Q^i$. 
\end{proof}
\begin{remark}\label{remark:tightenbound}
With tighter bounds on $\ta$, $\la$ and $f_u$ provided by Theorem \ref{theoremdatadriven}, a larger admissible set $K_{BF}(x)$ as defined in \eqref{cbfcondition1} can be obtained. As a result, the data-driven-augmented aCBF-NLP controller tends to have a better control performance while always ensuring safety. It is expected that the system’s performance will be improved if the dataset $\mathscr{E}$ is large enough and the data in $\mathscr{E}$ are sufficiently ``diverse" (i.e., the whole state space is sufficient explored), but a formal proof is still under our investigation. 
The Lipschitz constant $L$ is needed in Theorem \ref{theoremdatadriven} to induce the bounds of $f_u$ from a finite number of data. A lot of existing work can be leveraged to estimating the Lipschitz constant of an unknown function, such as \cite{wood1996estimation,fazlyab2019efficient}. Moreover, the data-driven approach can be also combined with the aCBF-based controller shown in \eqref{cbfnlpgeneral} to reduce its conservatism. 
\end{remark}

\section{Simulation}
\label{sec:simulation}
In this section, three examples are provided to demonstrate the effectiveness of the proposed control method. More details about simulations  can be found at https://arxiv.org/abs/2302.08601.

\begin{example}\label{example1}
\label{singleinput}
Consider the following single-input system:
\begin{equation}
    \dot x=f_u+\theta_1\sin(x)+\theta_2  x^2+(\lambda_1+\lambda_2x^2)u,\label{sys:sim}
\end{equation}
where $x\in\R$ is the state and $u\in\R$ is the control input. The function $f_u=\cos(x)$ is unknown in the controller design; we choose the bounds of $f_u$ as $f_u\in[-2,2]$ such that Assumption \ref{assump:0} holds. The true values of the parameters $\theta_1=\theta_2=2,\lambda_1=1,\lambda_2=2$ are unknown in the controller design; we choose the bounds of these parameters as $\theta_1,\theta_2,\la_1,\la_2\in [-10,10]$ such that Assumption \ref{assump:1} holds. Note that loose bounds of the unknown parameters and the function are chosen deliberately. It is easy to verify that Assumption \ref{assump:2} is satisfied with $b=0.5$. We choose the safe set as $\C=\{x: h(x)\geq 0\}$ where $h(x)=x-1$, that is, we aim to make $x(t)\geq 1$ for all $t\geq 0$. The initial condition of \eqref{example1} is chosen as $x(0)=2$, the reference trajectory is selected  as $x_d=3\sin(t)$ and the 
nominal control $u_d$ is designed via feedback linearization. 

First, we demonstrate the performance of the safe controller obtained from  \eqref{cbfnlp1}.
The nominal values of the unknown parameters are $\theta_1^0 = \theta_2^0=\la_2^0 = 0,\la_1^0=0.5$, such that Condition (ii) of Theorem \ref{theorem1} holds because $\Psi_1=\la_1^0+\la_2^0x^2 = 0.5\neq 0$ for any $x\in\C$, which implies that $K_{BF}\neq \emptyset$. Other control parameters are selected such that Condition (iv) of Theorem \ref{theorem1} holds. Therefore, all conditions of Theorem \ref{theorem1} are satisfied. Applying the safe controller obtained from  \eqref{cbfnlp1}, the state evolution of the closed-loop system is shown as the blue line in Figure \ref{fig:1}. Then, we consider the aCBF-NLP-based safe controller combined with the data-driven approach.  We assume that $f_u$ has a Lipschitz constant $L=1$ and a dataset of 10 points is given. Applying the data-driven augmented, aCBF-NLP-based safe controller, the state evolution of the closed-loop system is shown as the pink line in Figure \ref{fig:1}.

From Fig. \ref{fig:1}, one can observe that the proposed aCBF-NLP controller, either with or without the data-driven technique, can ensure the safety of the system because the trajectory of $x$ always stays inside the safe region whose boundary is represented by the dashed red line, and the reference trajectory is well-tracked within the safe
set. Moreover, the performance of the data-driven augmented aCBF-NLP-based controller is less conservative since the tracking performance of the desired controller is better preserved inside the safe region and the state trajectory is allowed to approach the boundary of the safe set when the reference trajectory is outside the safe region.

\begin{figure}
    \centering
  \includegraphics[width=0.5\textwidth]{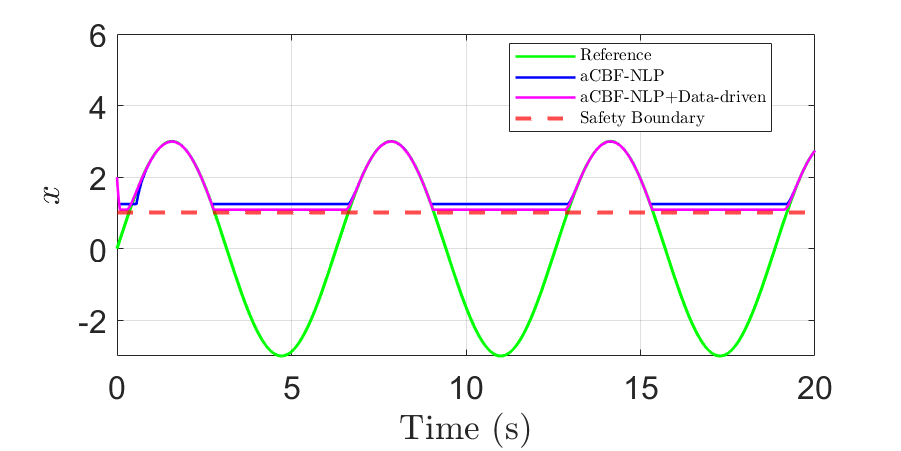}
    \caption{Evolution of the state variable $x$ of Example \ref{example1}. It can be seen that both the aCBF-NLP controller and the data-driven augmented aCBF-NLP controller can ensure safety as the trajectories of $x$ always stay in the safe region (i.e., above the dashed red line). One can also see that, when the data-driven technique developed in Theorem \ref{theoremdatadriven} is adopted, the aCBF-NLP controller has a better control performance. }
    \label{fig:1}
\end{figure}

\end{example}

\begin{example}\label{example2}
Consider the following adaptive cruise control system \cite{Xu2015ADHS}:
\begin{equation}\label{acceqn}
    \frac{\di}{\di t}\begin{bmatrix}
    D\\v_l\\v_f
    \end{bmatrix}=\begin{bmatrix}
    v_l-v_f\\a\\-\frac{1}{m}(f_0+f_1v_f+f_2v_f^2)
    \end{bmatrix}+\begin{bmatrix}
    0\\0\\\frac{1}{m}
    \end{bmatrix}u,
\end{equation}
where $v_l$ and $v_f$ are the velocities of the lead car and the following car, respectively, $D$ is the distance between the two vehicles, $u$ is the control input, 
$F_r\triangleq f_0+f_1v_f+f_2v_f^2$ is the aerodynamic drag term with constants $f_0,f_1,f_2$, and $m$ is the mass of the following car. The true values of the parameters $f_0=0.1 N,f_1=5 N\cdot s/m,f_2=0.25 N\cdot s^2/m,m=1650\;kg$ are unknown in the controller design. We assume that $f_0\in[0,10]$, $f_1\in[0,50]$, $f_2\in[0,20]$, $m\in[100,3000]$, and let $\theta=\frac{1}{m}[-f_0\ -f_1\ -f_2]^\top$ , $\varphi(v_f)=[1\ v_f\ v_f^2]^\top$, $\la=\frac{1}{m}$; one can easily see that Assumption  \ref{assump:1} is satisfied with $\theta_1\in[-0.1,0]$, $\theta_2\in[-0.5,0]$, $\theta_3\in[-0.2,0]$, $\la\in[0.00033,0.01]$. Note that in \eqref{acceqn} $f_u=0$, such that we selected $\underline{f}_u=\overline{f}_u=0$, from which one can see Assumption \ref{assump:0} is satisfied. Meanwhile, from \eqref{acceqn} one can easily verify that Assumption \ref{assump:2} holds with $b=1/3000$. The safety constraint of the following car is to keep a safe distance from the lead car, which can be expressed as $D/v_f\geq 1.8$ where 1.8 is the desired time headway in seconds. Therefore, the safe set is $\C=\{x: h(x)\geq 0\}$ where $h=D-1.8v_f$. The nominal controller $u_d$ is designed to keep the following car at a desired speed $v_{f, des}=22\; m/s$. We choose the nominal parameters $\ta^0=[-0.05\ -0.5\ -0.2]^\top$ and $\la^0=1/3000$, such that $\Psi_1=1.8^2/3000\neq 0$ for any $x\in\C$; thus,  $K_{BF}$ defined in \eqref{cbfcondition1} is non-empty, implying that Condition (ii) of Theorem \ref{theorem1} holds.

Applying the safe controller obtained from  \eqref{cbfnlp1}, the state and CBF evolution are shown as the blue lines in Figure \ref{fig:2}. Next, we consider the aCBF-NLP-based controller augmented with a dataset of 5 datapoints. The state and CBF evolution of the closed-loop system with the data-driven-augmented aCBF-NLP controller are shown as the brown lines in Figure \ref{fig:2}.
One can see that both controllers can ensure safety in the presence of parametric uncertainties since $h(t)\geq 0$ for any $t>0$, and the tracking performance is satisfactory when the reference trajectory is inside the safe region. Furthermore, the data-driven augmented aCBF-NLP controller has a slightly
better control performance in terms of maintaining the desired velocity because the bounds of the unknown parameters are tightened by the data-driven approach, as discussed in Remark \ref{remark:tightenbound}.

\begin{figure}[!htbp]
\centering
\begin{subfigure}{0.5\textwidth}
  \centering
  \includegraphics[width=1\textwidth]{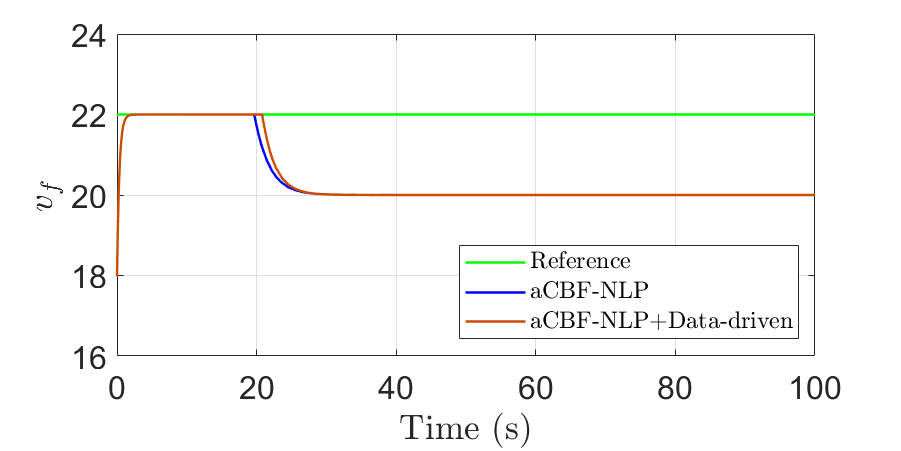}
  \caption{Evolution of $v_f$}
\end{subfigure}
\begin{subfigure}{0.5\textwidth}
  \centering
  \includegraphics[width=1\textwidth]{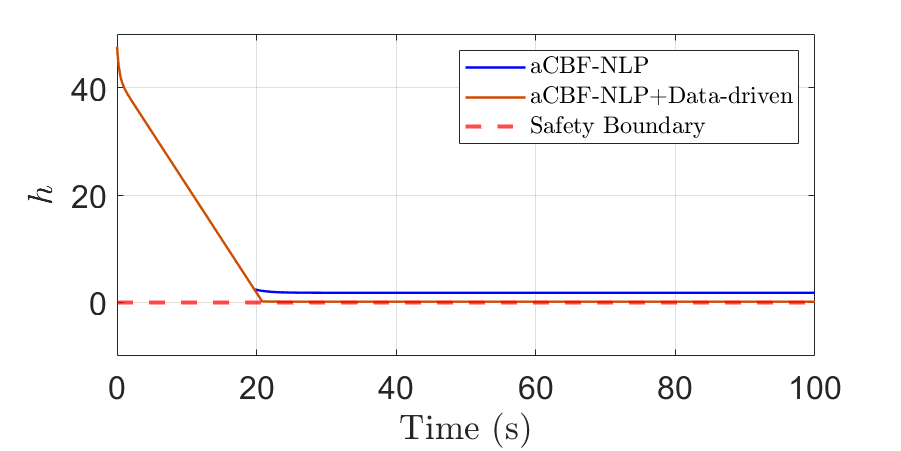}
  \caption{Evolution of  $h$}
\end{subfigure}
\caption{Simulation results of Example \ref{example2}. 
The aCBF-NLP controller, either with or without  the data-driven technique, can ensure safety of the system. When combined with the data-driven techniques, the aCBF-NLP controller has a slightly better control performance in terms of maintaining the desired velocity.
}
\label{fig:2}
\end{figure}
\end{example}

\begin{example}\label{example3}
\label{sec:simmass}
Consider the mass-spring system as follows:
\begin{equation}\label{example:masseqn}
    \frac{\di}{\di t}\begin{bmatrix}
    x_1\\x_2\\\dot x_1\\\dot x_2
    \end{bmatrix}=\begin{bmatrix}
    0&0&1&0\\0&0&0&1\\
    -\frac{k_1+k_2}{m_1}&\frac{k_2}{m_1}&0&0\\
    \frac{k_2}{m_2}&-\frac{k_2}{m_2}&0&0
    \end{bmatrix}\begin{bmatrix}
    x_1\\x_2\\\dot x_1\\\dot x_2
    \end{bmatrix}+\begin{bmatrix}
    0&0\\0&0\\
    \frac{1}{m_1}&0\\0&\frac{1}{m_2}
    \end{bmatrix}\begin{bmatrix}
    u_1\\u_2
    \end{bmatrix},
\end{equation}
where $x_1, x_2\in\R$ denote the position of two mass points, $u_1,u_2\in\R$ are control inputs, $m_1=m_2=0.2$ represent the mass, and $k_1=k_2=1$ denote the stiffness of two springs. We assume that all functions in \eqref{example:masseqn} are known, that is, $f_u=0$. 
Define $\theta_1=[-\frac{k_1+k_2}{m_1}\ \frac{k_2}{m_1}]^\top=[-10\ 5]^\top$, $\theta_2=[\frac{k_2}{m_2}\ -\frac{k_2}{m_2}]^\top=[5\ -5]^\top$, $\la_1=\frac{1}{m_1}=5$, and $\la_2=\frac{1}{m_2}=5$, which are unknown parameters in control design. One can easily verify that Assumption \ref{assump:1} is fulfilled
with $m_1,m_2\in[0.1,0.5]$ and $k_1,k_2\in[0,5]$, such that  $[-100\ 0]^\top\leq\theta_1\leq [0 \ 50]^\top$,
$[0\ -50]^\top\leq\theta_2\leq [50 \ 0]^\top$,  $\la_1\in[2,10]$, and $\la_2\in[2,10]$. It is obvious that Assumption \ref{assump:2} is fulfilled with $b_1=b_2=1$. The desired trajectories are selected as  $x_{1d}=0$, $x_{2d}=1+\sin (t)$, with a nominal PD controller $u_d$ designed to track $x_{1d}$, $x_{2d}$,
The safe set is defined as $\C=\{x: h(x)\geq 0\}$ with $h=x_2-x_1-0.5$, which aims to keep the distance between two masses. The initial conditions are selected as $x_1(0)=\dot x_1(0)=\dot x_2(0)=0$ and $x_2(0)=1$. Since the relative degree of $h$ is equal to 2, an exponential CBF that has a relative degree 1 is constructed as $\bar h=\dot h+15h=\dot x_2-\dot x_1+15(x_2-x_1-0.5)$. One can easily see that $\bar h\geq 0$ implies $h\geq 0$ because $\bar h(x_1(0),\dot x_1(0),x_2(0),\dot x_2(0))\geq 0$. Hence, we can use $\bar h$ to replace $h$ in Theorem \ref{theorem1} and \ref{theorem3}. 

We first consider the aCBF-NLP-based controller proposed in Theorem \ref{theorem1}. The nominal values of the unknown parameters are selected as $\ta_1^0=[-50\ 25]^\top$, $\ta_2^0=[25\ -25]^\top$, and $\la_1^0=\la_2^0=6$. Therefore, one can see that Condition (ii) of Theorem \ref{theorem1} holds (i.e.,$K_{BF}$ defined in \eqref{cbfcondition1} is non-empty) because $\Psi_1=[6\ 6]\neq 0$ for any $x\in\C$.
Applying the safe controller obtained from \eqref{cbfnlp1}, the state and CBF evolution are shown as the blue lines in Fig. \ref{fig:massresult}. Then, the aCBF-NLP-based controller is augmented with a dataset of 4 datapoints.
The state and CBF evolution are represented by the brown lines in Fig. \ref{fig:massresult}. It can be seen that both controllers can guarantee the safety since $h(x(t))\geq 0$ for any $t>0$, while the performance of the aCBF-NLP controller is improved if the data-driven approach is adopted.

Finally, we show how the results of Theorem \ref{theorem3} can be applied to \eqref{example:masseqn}. From now on we do not assume $g$ and $g_\la$ are diagonal matrices. It is easy to verify that Assumptions \ref{assump:0}, \ref{assump:3}, \ref{assump:4} hold true with $b^*=1$, $\la_{11},\la_{22}\in[-1,1]$, $\la_{12},\la_{21}\in [2,10]$, $[-100\ 0]^\top\leq\theta_1\leq [0 \ 50]^\top$, and
$[0\ -50]^\top\leq\theta_2\leq [50 \ 0]^\top$.
We select the nominal parameters  $\theta^0_1=[-50\ 25]^\top$, $\ta^0_2=[25\ -25]^\top$, $\la^0_{11}=\la^0_{22}=6$, $\la^0_{12}=\la^0_{21}=0$ and use the aforementioned exponential CBF $\bar h$. Thus, it is easy to verify that Condition (ii) of Theorem \ref{theorem3} holds (i.e.,$K_{BF}^g$ defined in \eqref{cbfconditiongeneral1} is non-empty) because $\Psi_1=12\neq 0$ for any $x\in\C$.

Applying the safe control law obtained  from \eqref{cbfnlpgeneral}, the state and CBF evolution are shown in Fig. \ref{fig:massresult2}, from which one can see that the safety is ensured since $h(x(t))\geq 0$ for any $t>0$. However, from Fig. \ref{fig:massresult2}(a) and \ref{fig:massresult2}(b), it can be seen that the control performance is conservative, i.e., the desired control performance is not well preserved inside the safe region. This phenomenon verifies what we discussed in Remark \ref{remark:generaldisadvantage}, i.e., $u$ might not be close to $u_d$ since it is always proportional to the partial derivative of $h$.

\begin{figure}[!htbp]
\centering
\begin{subfigure}{0.5\textwidth}
  \centering
  \includegraphics[width=1\textwidth]{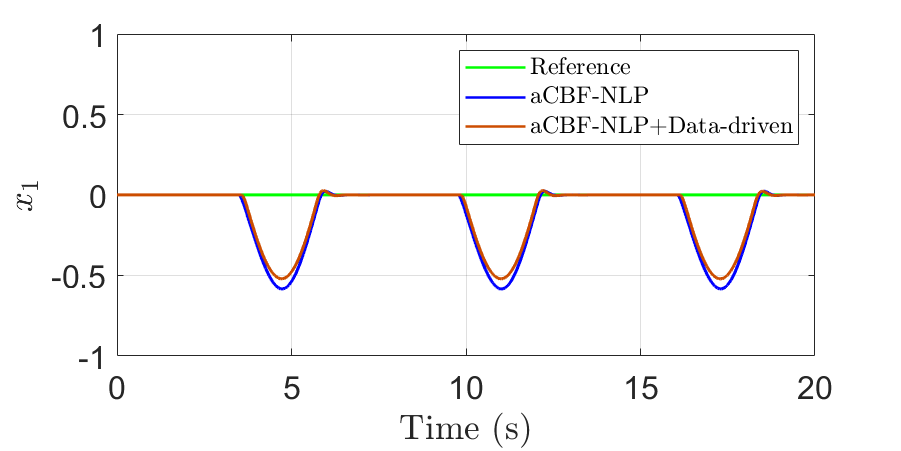}
  \caption{Evolution of $x_1$}
\end{subfigure}
\begin{subfigure}{0.5\textwidth}
  \centering
  \includegraphics[width=1\textwidth]{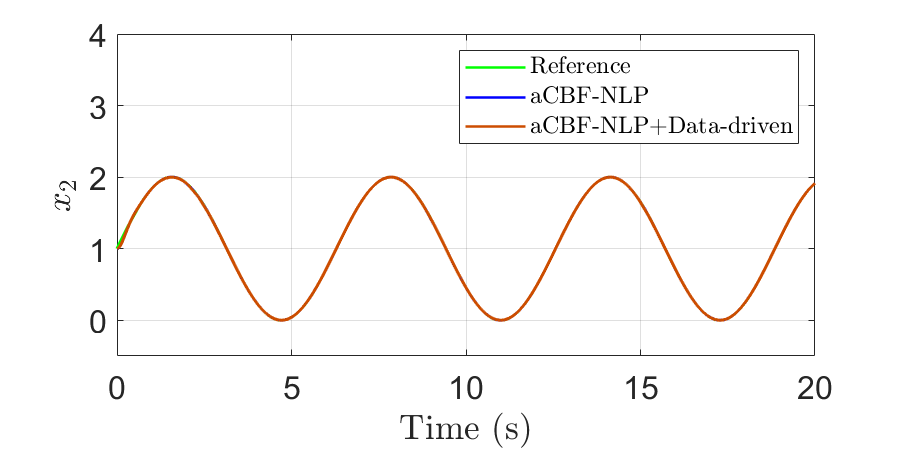}
  \caption{Evolution of $x_2$}
\end{subfigure}
\begin{subfigure}{0.5\textwidth}
  \centering
  \includegraphics[width=1\textwidth]{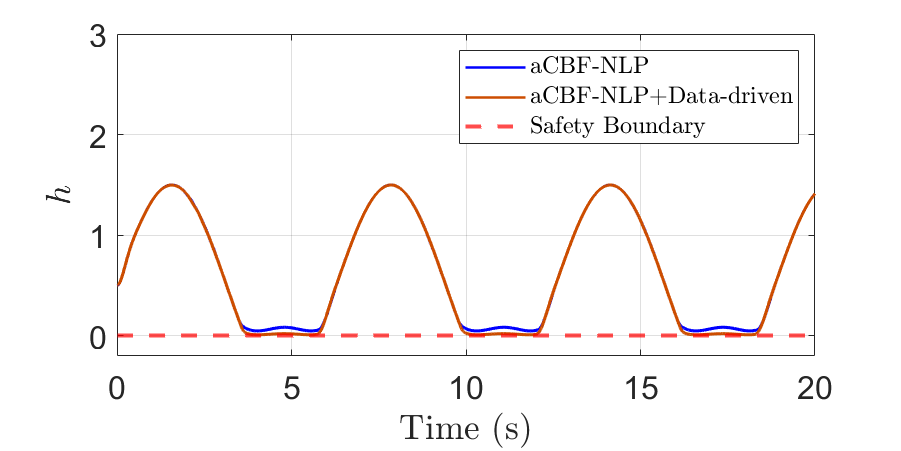}
  \caption{Evolution of  $h$}
\end{subfigure}
\caption{Simulation results of Example \ref{example3} using the control scheme shown in \eqref{cbfnlp1}. 
From (c) it can be seen that the proposed aCBF-NLP-based controller can guarantee safety as $h$ is always non-negative; from (a) it can be seen that, if the data-driven techniques are adopted, the control performance becomes less conservative since $x_1$ can track the reference trajectory better inside the safe region. 
}
\label{fig:massresult}
\end{figure}

\begin{figure}[!htbp]
\centering
\begin{subfigure}{0.5\textwidth}
  \centering
  \includegraphics[width=1\textwidth]{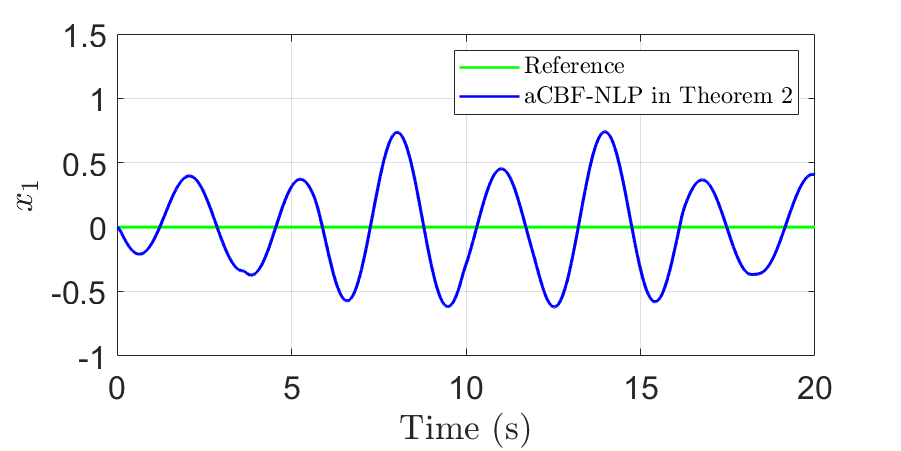}
  \caption{Evolution of $x_1$}
\end{subfigure}
\begin{subfigure}{0.5\textwidth}
  \centering
  \includegraphics[width=1\textwidth]{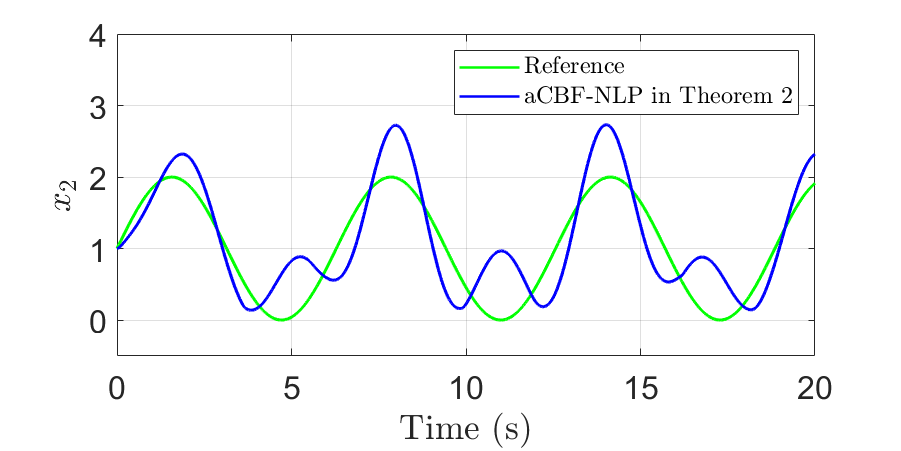}
  \caption{Evolution of $x_2$}
\end{subfigure}
\begin{subfigure}{0.5\textwidth}
  \centering
  \includegraphics[width=1\textwidth]{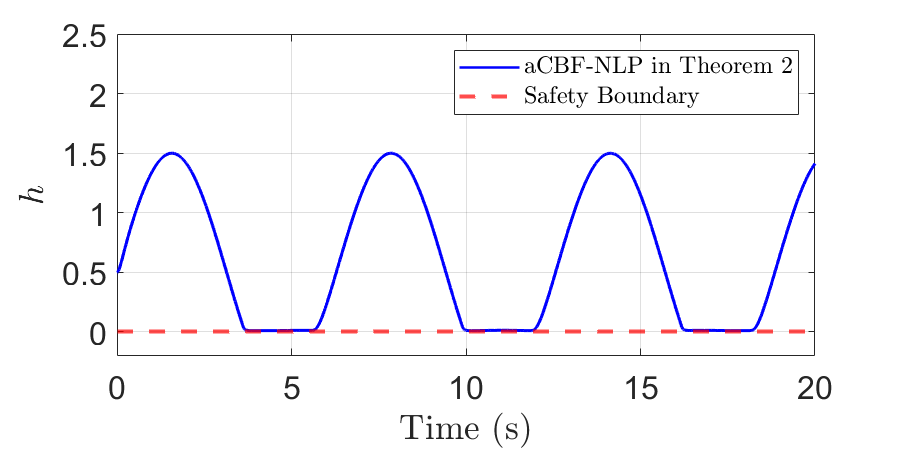}
  \caption{Evolution of  $h$}
\end{subfigure}
\caption{Simulation results of Example \ref{example3} using the control strategy shown in \eqref{cbfnlpgeneral}. From (c) it can be seen that the aCBF-NLP-based controller obtained by solving \eqref{cbfnlpgeneral} can guarantee safety; however, the control performance is unsatisfactory (i.e., the tracking performance of the desired controller is not well-preserved) due to the intrinsic conservatism discussed in Remark \ref{remark:generaldisadvantage}. 
}
\label{fig:massresult2}
\end{figure}
\end{example}

\section{Conclusion}
\label{sec:conclusion}
This paper proposes a singularity-free aCBF-NLP-based control strategy for systems with parametric uncertainties in both drift terms and control-input matrices, where the aCBF-NLP has a closed-form solution. Furthermore, a data-driven approach is developed to tighten the bounds of the unknown parameters and functions such that the performance of the proposed controller can be improved. Simulation results are also presented to validate the proposed  approach. Future work includes relaxing the assumptions of this paper and integrating this control method into learning-based control frameworks.

\section*{Appendices}
\setcounter{equation}{0}
\renewcommand\theequation{A.\arabic{equation}}

\begin{lemma}\label{lemma:1}
Define a function $s(\cdot)$ as
\begin{equation}
    s(y)=y+\frac{\ka_1}{b}+\frac{\ka_2^2 y^2}{b(\ka_2|h_{x_2}||y|+\ep_2)},
\end{equation}
where $\ka_1,\ka_2\geq 0$ and $b,\ep_2, |h_{x_2}|>0$ are considered as constants.
The function $s(\cdot)$  has the following properties:\\
(i) When  $\bar b-\kappa_2\geq 0$, where $\bar b=b|h_{x_2}|$, $s(y)$ is monotonically increasing with respect to $y$, and $\lim_{y\to-\infty}s(y)=-\infty$, $\lim_{y\to+\infty}s(y)=+\infty$;\\
(ii) When $\bar b-\kappa_2<0$, $s(y)$ has a global minimum 
$
y^*=\frac{\bar\ep_2[(\ka_2-\bar b)-\sqrt{\ka_2(\ka_2-\bar b)}]}{\ka_2(\ka_2-\bar b)},
$
where $\bar\ep_2=\ep_2/|h_{x_2}|$, and $\lim_{y\to-\infty}s(y)=\lim_{y\to+\infty}s(y)=+\infty$.
\end{lemma}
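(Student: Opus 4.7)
The function $s(\cdot)$ is smooth everywhere except possibly at $y=0$, where the only source of non-smoothness is $|y|$ appearing in the denominator of the last term. The plan is to exploit this structure by writing
\[ s(y) = y + \frac{\kappa_1}{b} + T(|y|), \qquad T(z) \triangleq \frac{\kappa_2^2 z^2}{b(\kappa_2|h_{x_2}|z + \epsilon_2)},\ z\geq 0, \]
so that $s$ is continuous at the origin and $s'(y) = 1 + \mathrm{sgn}(y)\,T'(|y|)$ for $y\neq 0$. Both $\pm\infty$ limits, the monotonicity claim of part (i), and the location of the extremum in part (ii) then reduce to a single one-dimensional analysis of $T'$ on $[0,\infty)$.

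A short calculation (differentiating and rewriting $\kappa_2\bar h z(\kappa_2\bar h z + 2\epsilon_2) = (\kappa_2\bar h z + \epsilon_2)^2 - \epsilon_2^2$ in the numerator) yields the key identity
\[ T'(z) = \frac{\kappa_2}{\bar b}\left(1 - \frac{\epsilon_2^2}{(\kappa_2\bar h z + \epsilon_2)^2}\right), \qquad \bar h=|h_{x_2}|,\ \bar b = b\bar h. \]
From this it is immediate that $T'$ is continuous, starts at $T'(0)=0$, is strictly increasing on $[0,\infty)$, and satisfies $T'(z) \to \kappa_2/\bar b$ as $z\to\infty$ with $T'(z) < \kappa_2/\bar b$ for every finite $z$. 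A consequence used in both parts of the lemma is that $s'(y) = 1 + T'(y) > 0$ on $(0,\infty)$, so $s$ is strictly increasing on $[0,\infty)$ and $s(y)\to+\infty$ as $y\to+\infty$.

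For part (i), the hypothesis $\bar b - \kappa_2 \geq 0$ gives $\kappa_2/\bar b\leq 1$, hence $T'(z) < 1$ for every finite $z\geq 0$. On $y<0$ this yields $s'(y) = 1 - T'(-y) > 0$, which combined with continuity at $y=0$ and the previous paragraph delivers strict monotone increase on all of $\mathbb{R}$. For the behavior at $-\infty$, the identity for $T'$ implies $T(z) = (\kappa_2/\bar b)z + O(1)$, so $s(y)\sim (1-\kappa_2/\bar b)\,y$ as $y\to-\infty$; under the strict version $\bar b>\kappa_2$ this tends to $-\infty$ as claimed.

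For part (ii), $\bar b < \kappa_2$, so $T'$ crosses the value $1$ from below somewhere in $(0,\infty)$. Consequently $s'(y) = 1 - T'(-y)$ on $y<0$ changes sign exactly once, being positive for $-y$ small and negative for $-y$ large, which produces a unique critical point $y^*<0$ characterized by $T'(-y^*)=1$. Substituting the identity, this equation becomes $(\kappa_2\bar h(-y^*) + \epsilon_2)^2 = \epsilon_2^2\kappa_2/(\kappa_2-\bar b)$; taking the positive square root (forced because $-y^*>0$) and rationalizing yields exactly the closed-form expression for $y^*$ stated in the lemma. Since $s$ is then strictly decreasing on $(-\infty, y^*)$ and strictly increasing on $(y^*,0)\cup[0,\infty)$, the point $y^*$ is a global minimum, and the asymptotic $s(y) \sim y + (\kappa_2/\bar b)|y|$ now has positive coefficient $\kappa_2/\bar b - 1$ on the left tail, forcing $s(y)\to+\infty$ on both sides. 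The main labor is the algebraic reduction of $T'(-y^*)=1$ to the advertised closed form; the rest is a direct sign analysis of $s'$, with the edge case $\bar b=\kappa_2$ of part (i) meriting a brief comment since my $-\infty$ limit is recovered only under the strict inequality.
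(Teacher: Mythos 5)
Your proof is correct, and it takes a genuinely different route from the paper's. The paper differentiates $s$ piecewise and works with the resulting rational expressions directly: positivity is read off for $y\geq 0$; for $y<0$ the numerator is a quadratic in $y$ whose sign/roots give monotonicity or the stationary point; the minimum is certified by a second-derivative evaluation at $y^*$; and both infinite limits are obtained via L'H\^opital's rule after a substitution $z=-y$. Your decomposition $s(y)=y+\kappa_1/b+T(|y|)$ together with the identity $T'(z)=\tfrac{\kappa_2}{\bar b}\bigl(1-\tfrac{\epsilon_2^2}{(\kappa_2|h_{x_2}|z+\epsilon_2)^2}\bigr)$ (whose algebra I checked) buys a more transparent argument: $T'$ rises from $0$ toward the asymptotic slope $\kappa_2/\bar b$, so monotonicity of $s$, the unique crossing $T'(-y^*)=1$ — which rationalizes exactly to the stated $y^*$ — and the tail behavior $T(z)=(\kappa_2/\bar b)z+O(1)$ all follow from one formula, replacing both the quadratic analysis and L'H\^opital; your sign analysis of $s'$ (decreasing on $(-\infty,y^*)$, increasing thereafter) also certifies the \emph{global} minimum more directly than a pointwise second-derivative test. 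Two minor remarks: strict monotonicity of $T'$ needs $\kappa_2>0$ (harmless, since $\kappa_2=0$ makes $s(y)=y+\kappa_1/b$ trivial and part (ii) forces $\kappa_2>0$), and your flag on the boundary case is well taken — when $\bar b=\kappa_2>0$ your expansion gives $s(y)\to \kappa_1/b-\bar\epsilon_2/\bar b$ as $y\to-\infty$, a finite limit, so the $-\infty$ claim in part (i) holds only under the strict inequality $\bar b>\kappa_2$; the paper's own L'H\^opital computation actually yields this finite value in that case even though it asserts $-\infty$. This does not affect the monotonicity used in Proposition \ref{theorem2}, but the surjectivity of $s$ invoked there in the $\Psi_1=0$ branch does rely on the strict case.
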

\begin{proof}
Note that
the derivative of $s(y)$ with respect to $y$
can be expressed as
\begin{equation}
    \frac{\di s}{\di y}=\begin{cases}
    1+ \frac{\ka_2^2y(\ka_2 y+2\bar\ep_2)}{\bar b(\ka_2 y+\bar\ep_2)^2}, & {\rm if} \  y\geq 0,\\
    \frac{(\bar b-\ka_2)\ka_2^2y^2-2\ka_2\bar\ep_2(\bar b-\ka_2)y+\bar b\bar\ep_2^2}{\bar b(-\ka_2 y+\bar\ep_2)^2}, & {\rm if} \ y<0.
    \end{cases}
\end{equation}

(i) If $\ka_2=0$, then $s(y)=\frac{\ka_1}{b}+y$, from which one can see that the statement is true. We assume $\ka_2\neq 0$ in the following analysis. 
It can be seen that if $\bar b-\ka_2\geq 0$, $\frac{\di s}{\di y}> 0$ for any $y\in \R$, such that $s(y)$ is monotonically increasing. Meanwhile, one can see  $\lim_{y\to+\infty}s(y)=+\infty$ and $\lim_{y\to-\infty}s(y)=\lim_{y\to-\infty}\frac{y\bar b(-\ka_2y+\bar\ep_2)+\ka_2^2y^2}{\bar b(-\ka_2y+\bar\ep_2)}+\frac{\ka_1}{b}\xlongequal[]{z=-y} \lim_{z\to+\infty}\frac{-\bar b(\ka_2z^2+\bar\ep_2z)+\ka_2^2z^2}{\bar b(\ka_2z+\bar\ep_2)}+\frac{\ka_1}{b}$. Define $\alpha(z)=-\bar b(\ka_2z^2+\bar\ep_2z)+\ka_2^2z^2$, $\beta(z)=\bar b(\ka_2z+\bar\ep_2)$, and $I=(0, +\infty)$, from which one can see $\lim_{y\to-\infty}s(y)=\frac{\ka_1}{b}+\lim_{z\to+\infty}\frac{\alpha(z)}{\beta(z)}$. Since $\alpha$, $\beta$ are differentiable with respect to $z$ and $\beta'(z)\neq 0$ for any $z\in I$, applying L'Hôpital's rule gives
$\lim_{y\to-\infty}s(y)= \lim_{z\to+\infty}\frac{-2(\bar b-\ka_2)\ka_2z-\bar b\bar\ep_2}{\bar b\ka_2}+\frac{\ka_1}{b}=-\infty$.

(ii) If $\bar b-\ka_2<0$, $\frac{\di s}{\di y}> 0$ still holds when $y\geq 0$. When $y< 0$, asking $\frac{\di s}{\di y}=0$ gives a stationary point $y^*$. It can be verified $\frac{\di^2 s}{\di y^2}\bigg|_{y=y^*}=\frac{2(\ka_2(\ka_2-\bar b))^{\frac{3}{2}}}{\ka_2\bar b\bar\ep_2}>0$, such that $y^*$ is a global minimum. Similarly, one can prove $\lim_{y\to+\infty}s(y)=\infty$ and $   \lim_{y\to-\infty} s(y)\xlongequal[]{z=-y} \lim_{z\to+\infty}\frac{\ka_2(\ka_2-\bar b)z^2-\bar b\bar \ep_2 z}{\bar b(\bar\ep_2+\ka_2 z)}+\frac{\ka_1}{b}=\lim_{z\to+\infty}\frac{2\ka_2(\ka_2-\bar b)z-\bar b\bar\ep_2}{\bar b \ka_2}+\frac{\ka_1}{b}=+\infty,$
where the second equality arises from L'Hôpital's rule (the conditions of L'Hôpital's rule can be verified using the similar procedure in (i)).
\end{proof}

\begin{lemma}\label{lemma1}
For any $a\in\R^{n}$, $b\in\R^{n}$, $c\in\R$, the following inequalities hold:
\begin{IEEEeqnarray}{rCl}
\IEEEyesnumber \label{matrixcauchy}
\IEEEyessubnumber \label{matrixcauchy:1}
a^\top (f_\theta-f_\theta^0) &\geq&  -\mu\|a\| \|\Omega_\varphi\|,\\
\IEEEyessubnumber\label{matrixcauchy:2}
b^\top (g_\la-g_\la^0) b c &\geq& -\nu \|\Omega_\psi\|\|b\|^2|c|,
\end{IEEEeqnarray}
where $f_\theta$ is defined in \eqref{eqnstructure},
$g_\la$ is defined in \eqref{ggla}, $f_\ta^0$, $g_\la^0$, $\Omega_\varphi$, $\Omega_\psi$ are defined in \eqref{matrixdefine}, and $\mu$, $\nu$ are defined in \eqref{munugeneral}.
\end{lemma}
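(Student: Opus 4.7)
The plan is to prove both inequalities by two successive applications of the Cauchy--Schwarz inequality, exploiting the bilinear ``parameter times regressor'' structure built into the definitions of $f_\theta$ and $g_\lambda$. The main idea is to first pull inner products out over the unknown parameter blocks to produce factors of $\mu$ or $\nu$, and then to reassemble the remaining factors into $\|a\|\,\|\Omega_\varphi\|$ or $\|b\|^2\,\|\Omega_\psi\|$ through a simple pointwise bound.

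For \eqref{matrixcauchy:1}, I would start by expanding componentwise:
$
a^\top(f_\theta-f_\theta^0)=\sum_{i=1}^n a_i(\theta_i-\theta_i^0)^\top\varphi_i,
$
and then interpret this sum as the Euclidean inner product in $\mathbb{R}^{\sum_i p_i}$ between the stacked vector $\Theta-\Theta^0=[(\theta_1-\theta_1^0)^\top,\ldots,(\theta_n-\theta_n^0)^\top]^\top$ and the block vector $[a_1\varphi_1^\top,\ldots,a_n\varphi_n^\top]^\top$. Cauchy--Schwarz then yields the bound $\mu\sqrt{\sum_i a_i^2\|\varphi_i\|^2}$, using $\mu=\|\Theta-\Theta^0\|$ from \eqref{munugeneral}. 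For the second step I would use the pointwise inequality $\|\varphi_i\|^2\le\sum_j\|\varphi_j\|^2=\|\Omega_\varphi\|^2$ to conclude $\sum_i a_i^2\|\varphi_i\|^2\le\|a\|^2\|\Omega_\varphi\|^2$. Combining the two gives $|a^\top(f_\theta-f_\theta^0)|\le \mu\,\|a\|\,\|\Omega_\varphi\|$, from which \eqref{matrixcauchy:1} follows.

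For \eqref{matrixcauchy:2}, I would follow the same two-step recipe but with a double index. Expanding,
$
b^\top(g_\lambda-g_\lambda^0)b=\sum_{i,j=1}^n b_i b_j(\lambda_{ij}-\lambda_{ij}^0)^\top\psi_{ij},
$
and viewing this as the inner product between the stacked vector $\Lambda-\Lambda^0$ (of norm $\nu$) and the block vector with $(i,j)$-th block $b_i b_j\psi_{ij}$, one application of Cauchy--Schwarz gives the bound $\nu\sqrt{\sum_{i,j}b_i^2 b_j^2\|\psi_{ij}\|^2}$. I would then argue that $b_i^2\le\|b\|^2$ and $b_j^2\le\|b\|^2$, so $\sum_{i,j}b_i^2 b_j^2\|\psi_{ij}\|^2\le\|b\|^4\sum_{i,j}\|\psi_{ij}\|^2=\|b\|^4\|\Omega_\psi\|^2$. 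Multiplying by $c$ and taking absolute values gives $|b^\top(g_\lambda-g_\lambda^0)b\,c|\le\nu\,\|\Omega_\psi\|\,\|b\|^2\,|c|$, which yields \eqref{matrixcauchy:2}.

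I do not foresee a genuinely hard step: both inequalities reduce to Cauchy--Schwarz once the inner expressions are stacked appropriately. The only point requiring care is the bookkeeping of the block structure (in particular, matching the ordering in $\Omega_\varphi$, $\Omega_\psi$, $\Theta$, $\Lambda$ to the summation so that the ``stacked vector'' really does have Euclidean norm equal to $\mu$ or $\nu$, and making sure the Frobenius norm $\|\Omega_\psi\|$ indeed equals $\sqrt{\sum_{i,j}\|\psi_{ij}\|^2}$). Once the indexing is consistent, both inequalities are immediate.
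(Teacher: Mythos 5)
Your proof is correct and is essentially the paper's argument: both inequalities follow from Cauchy--Schwarz applied to the block parameter--regressor structure, with elementary bounds on the regressor blocks (e.g., $\|\varphi_i\|\le\|\Omega_\varphi\|$, $b_i^2b_j^2\le\|b\|^4$, or the paper's product-of-sums inequality) supplying the factors $\|\Omega_\varphi\|$ and $\|\Omega_\psi\|$. The only cosmetic difference is the grouping: you apply a single ``stacked'' Cauchy--Schwarz pairing $\Theta-\Theta^0$ (resp.\ $\Lambda-\Lambda^0$) with the regressor blocks weighted by $a_i$ (resp.\ $b_ib_j$), whereas the paper first peels off $\|a\|$ via $a^\top v\ge -\|a\|\,\|v\|$ (resp.\ $\|b\|^2$ via $|b^\top A b|\le \|A\|\,\|b\|^2$ with $\|A\|$ the Frobenius norm) and then bounds $\|f_\theta-f_\theta^0\|$ (resp.\ $\|g_\lambda-g_\lambda^0\|$) blockwise---both orderings yield the identical bound.
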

\begin{proof}
One can verify that $a^\top (f_\theta-f_\theta^0)\geq -\|a\|\|f_\theta-f_\theta^0\|=-\|a\|\sqrt{\sum_{i=1}^{n}((\theta_i-\ta_i^0)^\top\varphi_i)^2}\geq -\|a\|\sqrt{\sum_{i=1}^{n}\|\ta_i-\ta_i^0\|^2\|\varphi_i\|^2}
\geq -\|a\|\sqrt{\sum_{i=1}^{n}\|\ta_i-\ta_i^0\|^2}\\\sqrt{\sum_{i=1}^{n}\|\varphi_i\|^2}=-\|a\|\|\Theta-\Theta^0\|\|\Omega_\varphi\|=-\mu\|a\|\|\Omega_\varphi\|$, 
where the first and second inequality are derived from Cauchy-Schwarz inequality and the third inequality comes from the fact $\sum_{k=1}^{n} x_k^2y_k^2 \leq \left(\sum_{k=1}^{n} x_k^2 \right)
    \left(\sum_{k=1}^{n} y_k^2 \right), \ \forall x_k,y_k\in\R.$ Therefore, \eqref{matrixcauchy:1} holds.

Similarly, using Cauchy-Schwarz inequality, one can get $b^\top (g_\la-g_\la^0)bc\geq -|b^\top (g_\la-g_\la^0) b||c|\geq -\ \|g_\la-g_\la^0\|\|b\|^2|c|.$ 
Invoking the definition of the Frobenius norm, $\|g_\la-g_\la^0\|$ satisfies $\|g_\la-g_\la^0\|=\sqrt{\sum_{i=1}^{n}\sum_{j=1}^{n}((\la_{ij}-\la_{ij}^0)^\top \psi_{ij})^2}\leq \sqrt{\sum_{i=1}^{n}\sum_{j=1}^{n}\|\la_{ij}-\la_{ij}^0\|^2 \|\psi_{ij}\|^2}
    \leq
\sqrt{\sum_{i=1}^{n}\sum_{j=1}^{n}\|\psi_{ij}\|^2}\\
\sqrt{\sum_{i=1}^{n}\sum_{j=1}^{n}\|\la_{ij}-\la_{ij}^0\|^2}=\|\Lambda-\Lambda^0\|\|\Omega_\psi\|=\nu\|\Omega_\psi\|$. Therefore, \eqref{matrixcauchy:2} holds.

\end{proof}

\bibliographystyle{elsarticle-num}
\bibliography{adaptiveCBF.bib}

\begin{thebibliography}{10}
\expandafter\ifx\csname url\endcsname\relax
  \def\url#1{\texttt{#1}}\fi
\expandafter\ifx\csname urlprefix\endcsname\relax\def\urlprefix{URL }\fi
\expandafter\ifx\csname href\endcsname\relax
  \def\href#1#2{#2} \def\path#1{#1}\fi

\bibitem{Xu2015ADHS}
X.~Xu, P.~Tabuada, A.~Ames, J.~Grizzle, Robustness of control barrier functions
  for safety critical control, IFAC-PapersOnLine 48~(27) (2015) 54--61.

\bibitem{ames2016control}
A.~D. Ames, X.~Xu, J.~W. Grizzle, P.~Tabuada, Control barrier function based
  quadratic programs for safety critical systems, IEEE Trans. Autom. Control
  62~(8) (2016) 3861--3876.

\bibitem{garg2021robust}
K.~Garg, D.~Panagou, Robust control barrier and control {L}yapunov functions
  with fixed-time convergence guarantees, in: 2021 American Control Conference
  (ACC), 2021, pp. 2292--2297.

\bibitem{nguyen2021robust}
Q.~Nguyen, K.~Sreenath, Robust safety-critical control for dynamic robotics,
  IEEE Trans. Autom. Control 67~(3) (2021) 1073--1088.

\bibitem{verginis2021safety}
C.~K. Verginis, F.~Djeumou, U.~Topcu, Learning-based, safety-constrained
  control from scarce data via reciprocal barriers, in: 2021 IEEE 60th
  Conference on Decision and Control (CDC), 2021, pp. 83--89.

\bibitem{buch2021robust}
J.~Buch, S.-C. Liao, P.~Seiler, Robust control barrier functions with
  sector-bounded uncertainties, IEEE Control Syst. Lett. 6 (2021) 1994--1999.

\bibitem{wang2022disturbance}
Y.~Wang, X.~Xu, Disturbance observer-based robust control barrier functions,
  in: 2022 American Control Conference (ACC), 2023, pp. 3681--3687.

\bibitem{narendra1989stable}
K.~S. Narendra, A.~M. Annaswamy, Stable Adaptive Systems, Prentice-Hall, 1989.

\bibitem{aastrom1995adaptive}
K.~J. {\AA}str{\"o}m, B.~Wittenmark, Adaptive Control (2nd Ed), Addison-Wesley,
  1995.

\bibitem{KKK95}
M.~Krsti{\'c}, P.~V. Kokotovi{\'c}, I.~Kanellakopoulos, Nonlinear and
  {A}daptive {C}ontrol {D}esign, John Wiley \& Sons, Inc., 1995.

\bibitem{ioannou1996robust}
P.~A. Ioannou, J.~Sun, Robust {A}daptive {C}ontrol, Prentice-Hall, 1996.

\bibitem{astolfi2008nonlinear}
A.~Astolfi, D.~Karagiannis, R.~Ortega, Nonlinear and Adaptive Control with
  Applications, Springer-Verlag, 2008.

\bibitem{sastry1989adaptive}
S.~S. Sastry, A.~Isidori, Adaptive control of linearizable systems, IEEE Trans.
  Autom. Control 34~(11) (1989) 1123--1131.

\bibitem{kanellakopoulos1991systematic}
I.~Kanellakopoulos, P.~V. Kokotovi{\'c}, A.~S. Morse, Systematic design of
  adaptive controllers for feedback linearizable systems, IEEE Trans. Autom.
  Control 36 (1991) 1241 -- 1253.

\bibitem{KKK92}
M.~Krsti{\'c}, I.~Kanellakopoulos, P.~Kokotovi{\'c}, Adaptive nonlinear control
  without overparametrization, Syst. \& Control Lett. 19~(3) (1992) 177--185.

\bibitem{anderson1986stability}
B.~D. Anderson, R.~R. Bitmead, C.~R. Johnson~Jr, P.~V. Kokotovic, R.~L. Kosut,
  I.~M. Mareels, L.~Praly, B.~D. Riedle, Stability of Adaptive Systems:
  Passivity and Averaging Analysis, MIT Press, 1986.

\bibitem{kosut1987stability}
R.~Kosut, B.~Anderson, I.~Mareels, Stability theory for adaptive systems:
  Method of averaging and persistency of excitation, IEEE Trans. Autom. Control
  32~(1) (1987) 26--34.

\bibitem{tao2014multivariable}
G.~Tao, Multivariable adaptive control: A survey, Automatica 50~(11) (2014)
  2737--2764.

\bibitem{annaswamy2021historical}
A.~M. Annaswamy, A.~L. Fradkov, A historical perspective of adaptive control
  and learning, Annu. Rev. Control 52 (2021) 18--41.

\bibitem{KK95}
M.~Krsti{\'c}, P.~V. Kokotovi{\'c}, Control {L}yapunov functions for adaptive
  nonlinear stabilization, Syst. \& Control Lett. 26~(1) (1995) 17--23.

\bibitem{taylor2020adaptive}
A.~J. Taylor, A.~D. Ames, Adaptive safety with control barrier functions, in:
  2020 American Control Conference (ACC), 2020, pp. 1399--1405.

\bibitem{lopez2020robust}
B.~T. Lopez, J.-J.~E. Slotine, J.~P. How, Robust adaptive control barrier
  functions: An adaptive and data-driven approach to safety, IEEE Control Syst.
  Lett. 5~(3) (2020) 1031--1036.

\bibitem{zhao2020adaptive}
P.~Zhao, Y.~Mao, C.~Tao, N.~Hovakimyan, X.~Wang, Adaptive robust quadratic
  programs using control {L}yapunov and barrier functions, in: 2020 IEEE 59th
  Conference on Decision and Control (CDC), 2020, pp. 3353--3358.

\bibitem{black2021fixed}
M.~Black, E.~Arabi, D.~Panagou, A fixed-time stable adaptation law for
  safety-critical control under parametric uncertainty, in: 2021 European
  Control Conference (ECC), 2021, pp. 1328--1333.

\bibitem{isaly2021adaptive}
A.~Isaly, O.~S. Patil, R.~G. Sanfelice, W.~E. Dixon, Adaptive safety with
  multiple barrier functions using integral concurrent learning, in: 2021
  American Control Conference (ACC), 2021, pp. 3719--3724.

\bibitem{cohen2022high}
M.~H. Cohen, C.~Belta, High order robust adaptive control barrier functions and
  exponentially stabilizing adaptive control {L}yapunov functions, in: 2022
  American Control Conference (ACC), 2022, pp. 2233--2238.

\bibitem{wang2022observer}
Y.~Wang, X.~Xu, Observer-based control barrier functions for safety critical
  systems, in: 2022 American Control Conference (ACC), 2022, pp. 709--714.

\bibitem{wang2022robust}
S.~Wang, B.~Lyu, S.~Wen, K.~Shi, S.~Zhu, T.~Huang, Robust adaptive
  safety-critical control for unknown systems with finite-time elementwise
  parameter estimation, IEEE Trans. Syst. Man Cybern.: Syst. (2022).

\bibitem{huang2022safety}
C.~Huang, L.~Long, Safety-critical model reference adaptive control of switched
  nonlinear systems with unsafe subsystems: A state-dependent switching
  approach, IEEE Trans. Syst. Man Cybern.: Syst. (2022).

\bibitem{azimi2021exponential}
V.~Azimi, S.~Hutchinson, Exponential control {L}yapunov-barrier function using
  a filtering-based concurrent learning adaptive approach, IEEE Trans. Autom.
  Control 67~(10) (2022) 5376--5383.

\bibitem{verginis2022funnel}
C.~K. Verginis, Funnel control for uncertain nonlinear systems via zeroing
  control barrier functions, IEEE Control Syst. Lett. 7 (2022) 853--858.

\bibitem{bechlioulis2008robust}
C.~P. Bechlioulis, G.~A. Rovithakis, Robust adaptive control of feedback
  linearizable {MIMO} nonlinear systems with prescribed performance, IEEE
  Trans. Autom. Control 53~(9) (2008) 2090--2099.

\bibitem{xu2003robust}
H.~Xu, P.~A. Ioannou, Robust adaptive control for a class of {MIMO} nonlinear
  systems with guaranteed error bounds, IEEE Trans. Autom. Control 48~(5)
  (2003) 728--742.

\bibitem{horn2012matrix}
R.~A. Horn, C.~R. Johnson, Matrix Analysis, Cambridge {U}niversity {P}ress,
  2012.

\bibitem{moore2009introduction}
R.~E. Moore, R.~B. Kearfott, M.~J. Cloud, Introduction to {I}nterval
  {A}nalysis, SIAM, 2009.

\bibitem{isidori1985nonlinear}
A.~Isidori, Nonlinear {C}ontrol {S}ystems: {A}n {I}ntroduction, Springer, 1985.

\bibitem{xu2018constrained}
X.~Xu, Constrained control of input--output linearizable systems using control
  sharing barrier functions, Automatica 87 (2018) 195--201.

\bibitem{khalil2002nonlinear}
H.~K. Khalil, Nonlinear {S}ystems, Prentice-Hall, 2002.

\bibitem{nguyen2016exponential}
Q.~Nguyen, K.~Sreenath, Exponential control barrier functions for enforcing
  high relative-degree safety-critical constraints, in: 2016 American Control
  Conference (ACC), 2016, pp. 322--328.

\bibitem{tan2021high}
X.~Tan, W.~S. Cortez, D.~V. Dimarogonas, High-order barrier functions:
  Robustness, safety, and performance-critical control, IEEE Trans. Autom.
  Control 67~(6) (2021) 3021--3028.

\bibitem{wood1996estimation}
G.~Wood, B.~Zhang, Estimation of the {L}ipschitz constant of a function, J.
  Global Optim. 8 (1996) 91--103.

\bibitem{fazlyab2019efficient}
M.~Fazlyab, A.~Robey, H.~Hassani, M.~Morari, G.~Pappas, Efficient and accurate
  estimation of {L}ipschitz constants for deep neural networks, Advances in
  Neural Information Processing Systems 32 (2019).

\end{thebibliography}

\end{sloppypar}
\end{document}